\documentclass[a4paper,12pt,table]{article}
\usepackage{amssymb,amsmath,latexsym,color,xcolor,amsthm,authblk,mathpazo,palatino,bbding,setspace,datetime,breakcites,hyphenat,microtype,diagbox,tikz,subcaption,graphicx,titlesec}
\usepackage[round,authoryear]{natbib}
\citestyle{authordate}
\usepackage{hyperref,theoremref}
\usepackage{cleveref}
\usepackage{xspace}
\usepackage{comment}
\usepackage{xifthen}
\usepackage[titletoc,title]{appendix}
\usepackage[shortlabels]{enumitem}
\usepackage{graphicx}
\usepackage{float}
\usepackage{amssymb}
\restylefloat{table}
\everymath{\displaystyle}
\usepackage{pstricks}
\usepackage{blkarray}

\theoremstyle{plain}

\definecolor{armygreen}{rgb}{0.29, 0.8, 0.13}
\definecolor{auburn}{rgb}{0.43, 0.21, 0.1}
\definecolor{burgundy}{rgb}{0.5, 0.0, 0.13}
\definecolor{medium red}{rgb}{.490,.298,.337}
\definecolor{dark red}{rgb}{.235,.141,.161}
   
\hypersetup{
	colorlinks = true,
	linkcolor = {burgundy},
	citecolor = {burgundy}, 		
	linkbordercolor = {white},
}

\captionsetup[sub]{font=scriptsize}

\setlength{\oddsidemargin}{-0.25in} 
\setlength{\textwidth}{7in}   
\setlength{\topmargin}{-.75in}  
\setlength{\textheight}{9.5in}  

\interfootnotelinepenalty=10000
\raggedbottom

\let\OLDthebibliography\thebibliography
\renewcommand\thebibliography[1]{
	\OLDthebibliography{#1}
	\setlength{\parskip}{0pt}
	\setlength{\itemsep}{0pt plus 0.1ex}
}

\DeclareFontFamily{U}{mathx}{\hyphenchar\font45}
\DeclareFontShape{U}{mathx}{m}{n}{<-> mathx10}{}
\DeclareSymbolFont{mathx}{U}{mathx}{m}{n}
\DeclareMathAccent{\widebar}{0}{mathx}{"73}

\titleformat{\section}[block]{\normalfont\scshape\large\filcenter}{\thesection}{1em}{}
\titleformat{\subsection}{\normalfont\scshape\large}{\thesubsection}{1em}{}
\titleformat{\subsubsection}{\normalfont\scshape\large}{\thesubsubsection}{1em}{}

\newtheorem{theorem}{Theorem}
\newtheorem*{theorem*}{Theorem}

\newtheorem{claim}{Claim}
\newtheorem{lemma}{Lemma}

\newtheorem*{claim*}{Claim}

\theoremstyle{definition}
\newtheorem{definition}{Definition}
\newtheorem{example}{Example}
\theoremstyle{remark}
\newtheorem{remark}{\textsc{Remark}}

\makeatletter
\newcommand*\bigcdot{\mathpalette\bigcdot@{.5}}
\newcommand*\bigcdot@[2]{\mathbin{\vcenter{\hbox{\scalebox{#2}{$\m@th#1\bullet$}}}}}
\makeatother
\newcommand{\objects}{\ensuremath{X}\xspace}
\newcommand{\randoma}{\ensuremath{A}\xspace}
\newcommand{\randomaof}[2]{
    \ifthenelse{\isempty{#1}}
    {
        \ifthenelse{\isempty{#2}}
        {\ensuremath{\randoma_{i\boldsymbol{\bigcdot}}}\xspace}
        {\ensuremath{\randoma_{i#2}}\xspace}
    }
    {
        \ifthenelse{\isempty{#2}}
        {\ensuremath{\randoma_{#1\boldsymbol{\bigcdot}}}\xspace}
        {\ensuremath{\randoma_{#1#2}}\xspace}
    }
}

\newcommand{\rendow}[2]{
    \ifthenelse{\isempty{#1}}
        {
            \ifthenelse{\isempty{#2}}
            {\ensuremath{E_{i\boldsymbol{\bigcdot}}}\xspace}
            {\ensuremath{E_{i#2}}\xspace}
        }
        {
            \ifthenelse{\isempty{#2}}
            {\ensuremath{E_{#1\boldsymbol{\bigcdot}}}\xspace}
            {\ensuremath{E_{#1#2}}\xspace}
        }
}
\newcommand{\dendow}[1]{
    \ifthenelse{\isempty{#1}}
    {\ensuremath{x_{i}}\xspace}
    {\ensuremath{x_{#1}}\xspace}
}
\newcommand{\rrule}{\ensuremath{\varphi}\xspace}
\newcommand{\rruleof}[3]{
    \ifthenelse{\isempty{#1}}
    {
        \ifthenelse{\isempty{#2}}
        {\ensuremath{\rrule_{i\boldsymbol{\bigcdot}}(#3)}\xspace}
        {\ensuremath{\rrule_{i#2}(#3)}\xspace}
    }
    {
        \ifthenelse{\isempty{#2}}
        {\ensuremath{\rrule_{#1\boldsymbol{\bigcdot}}(#3)}\xspace}
        {\ensuremath{\rrule_{#1#2}(#3)}\xspace}
    }
}
\newcommand{\without}[1]{
    \ifthenelse{\isempty{#1}}
    {\ensuremath{P_{-i}}}
    {\ensuremath{P_{-#1}}}
}
\newcommand{\prefof}[3]{
    \ifthenelse{\isempty{#3}}
    {
        \ifthenelse{\isempty{#1}}
            {
                \ifthenelse{\isempty{#2}}
                {\ensuremath{P_i(1)}\xspace}
                {\ensuremath{P_i(#2)}\xspace}
            }
            {
                \ifthenelse{\isempty{#2}}
                {\ensuremath{P_{#1}(1)}\xspace}
                {\ensuremath{P_{#1}(#2)}\xspace}
            }
    }
    {
        \ifthenelse{\isempty{#1}}
            {
                \ifthenelse{\isempty{#2}}
                {\ensuremath{P_i^{#3}(1)}\xspace}
                {\ensuremath{P_i^{#3}(#2)}\xspace}
            }
            {
                \ifthenelse{\isempty{#2}}
                {\ensuremath{P_{#1}^{#3}(1)}\xspace}
                {\ensuremath{P_{#1}^{#3}(#2)}\xspace}
            }
    }
}

\newcommand{\suml}{\sum\limits_}

\newdateformat{monthyeardate}{\monthname[\THEMONTH], \THEYEAR}



\setlength{\parindent}{.2in}

\title{\textsc{{A characterization of strategy-proof probabilistic assignment rules}}\thanks{{\textit{The authors are listed in alphabetical order of their last names.}
An earlier version of this paper, circulated as ``On Probabilistic Assignment Rules'' (\cite{gogulapati2025probabilisticassignmentrules}), contained preliminary results on this topic.} }}
\author[]{Sai Praneeth Donthu\thanks{Indian Institute of Technology Kanpur; Email: praneeth.donthu@gmail.com} }
\author[]{Souvik Roy\thanks{Indian Statistical Institute, Kolkata; Email: gametheory.souvik@gmail.com} } 
\author[]{Soumyarup Sadhukhan\thanks{Indian Institute of Technology Kanpur; Email: soumyarups@iitk.ac.in} }
\author[]{Gogulapati Sreedurga\thanks{Indian Institute of Technology Hyderabad; Email: gogulapatis@alum.iisc.ac.in} }
\affil[]{}
\date{\monthyeardate\today}

\begin{document}
	
	\maketitle
	
	\begin{abstract}\singlespacing
We study the classical \emph{probabilistic assignment problem}, where finitely many indivisible objects are to be probabilistically or proportionally assigned among an equal number of agents. Each agent has an initial deterministic endowment and a strict preference over the objects. While the deterministic version of this problem is well understood, most notably through the characterization of the Top Trading Cycles (TTC) rule by \citet{ma1994strategy}, much less is known in the probabilistic setting. Motivated by practical considerations, we introduce a weakened incentive requirement, namely \emph{SD-top-strategy-proofness}, which precludes only those manipulations that increase the probability of an agent’s top-ranked object.

Our first main result shows that, on any \emph{free pair at the top (FPT)} domain \citep{sen2011gibbard}, the TTC rule is the unique probabilistic assignment rule satisfying \emph{SD–Pareto efficiency}, \emph{SD–individual rationality}, and \emph{SD–top-strategy-proofness}. We further show that this characterization remains valid when Pareto efficiency is replaced by the weaker notion of \emph{SD–pair efficiency}, provided the domain satisfies the slightly stronger \emph{free triple at the top (FTT)} condition \citep{sen2011gibbard}. Finally, we extend these results to the ex post notions of efficiency and individual rationality. 

Together, our findings generalize the classical deterministic results of \citet{ma1994strategy} and \citet{ekici2024pair} along three dimensions: extending them from deterministic to probabilistic settings, from full strategy-proofness to top-strategy-proofness, and from the unrestricted domain to the more general FPT and FTT domains.

		
		\noindent 
				
		\vspace{4mm}
		
		\noindent JEL Classification: C78, D71, D82 \\
        
\noindent MSC2010 Subject Classification: 91B14, 91B03, 91B68
		
		\vspace{4mm}
		\noindent Keywords: Probabilistic Assignment; SD-strategy-proofness; SD-Pareto efficiency; SD-pair efficiency; SD-individual rationality; TTC 
		
	\end{abstract}
	

	\maketitle
	
{
	\def\OldComma{,}
	\catcode`\,=13
	\def,{%
		\ifmmode%
		\OldComma\discretionary{}{}{}%
		\else%
		\OldComma%
		\fi%
	}%


\section{Introduction}\label{se_intro}
\subsection{Description and motivation of the problem}

We consider the classical assignment problem with initial endowments in the probabilistic setting. In an assignment problem, a finite set of objects has to be allocated among a finite set of agents (with the same cardinality)  based on their (strict) preferences over the objects. Moreover, the agents have endowments over the objects. Practical applications of this model can be thought of as house allocation among existing tenants, course allocation among the faculty members who have endowments for the courses they taught last time, office space assignment, parking space assignment, and many others. 

A deterministic assignment rule (henceforth, DAR) is a function from the set of reported preferences to the set of assignments, where an assignment is a 1-1 mapping between the set of agents and the set of objects. Some desirable properties of deterministic rules are strategy-proofness, Pareto-efficiency, individual rationality, pair-efficiency, etc.; a deterministic rule is strategy-proof if no agent has an incentive to misreport her preference to get a better outcome. 
Pareto efficiency requires that there exists no group of agents who can reshuffle their assignments, under the deterministic rule, in such a way that all agents in the group are weakly better off, with at least one agent strictly better off. A weaker notion is that of pair efficiency, which imposes the same requirement but restricts attention only to groups consisting of two agents.
Finally, a deterministic rule is individually rational if every agent is guaranteed to get a better outcome than her initial endowment according to her preference.

\cite{ma1994strategy} shows that on the unrestricted domain (when all possible orderings over the objects are feasible), a deterministic rule satisfies strategy-proofness, Pareto-efficiency, and individual rationality if and only if it is the TTC rule, a deterministic rule first introduced in \cite{shapley1974cores} in the context of the object reallocation problem. \cite{svensson1994queue}, \cite{anno2015short}, and \cite{sethuraman2016alternative} 
provided shorter proofs of this result. Recently, \cite{ekici2024pair} shows that the same result holds even if we replace Pareto-efficiency with a much weaker condition called pair-efficiency. 

In the probabilistic setup, instead of a deterministic rule, a probabilistic assignment rule (henceforth, PAR)  is considered, which assigns a bi-stochastic matrix at every instance of reported preferences of the agents. Here, the order of a bi-stochastic matrix is the common cardinality of the agent set and the object set, and each row of the matrix denotes the probabilities of assigning the objects to a particular agent, whereas the columns represent the probabilities of assigning a specific object to different agents. In the mechanism design literature, it is well-known that probabilistic rules are better in terms of fairness consideration when compared to their deterministic counterpart. For probabilistic rules, the corresponding desirable properties like strategy-proofness, Pareto-efficiency, individual rationality, and pair-efficiency can be defined using the first-order stochastic dominance.

In this paper, we investigate the structure of probabilistic rules satisfying the desired properties, namely, efficiency (both Pareto and pair), individual rationality, and strategy-proofness, under the assumption that the initial endowment is deterministic. The concept of pair efficiency was introduced only recently by \citet{ekici2024pair}. This notion of efficiency is considerably weaker than the classical Pareto efficiency in the deterministic setting (see Example~1 in \cite{ekici2024pair}), and it becomes even weaker in the probabilistic framework (see Example~1). We also consider a weaker version of the standard SD-strategy-proofness notion, which allows for better real-life applicability of our results. To the best of our knowledge, this is the first work that examines these notions in a probabilistic environment.
In what follows, we elaborate on the motivation for weakening the notion of strategy-proofness and discuss the relevance of assuming deterministic initial endowments in the probabilistic context.

While Pareto-efficiency, pair efficiency, and individual rationality have straightforward extensions in the probabilistic setup using first-order stochastic dominance, many extensions of the notion of strategy-proofness are possible (see \cite{sen2011gibbard, aziz2015generalizing, chun2020upper}). The most used one in the literature (\cite{gibbard1977manipulation}) assumes that an agent will manipulate if, by misreporting, she can be better off for at least one upper contour set.  No doubt, this is quite a strong condition and may not be appropriate in a practical scenario, as in reality, agents might not care for all upper contour sets. Keeping this point in mind, we consider a weakening of strategy-proofness that we call as SD-top-strategy-proofness.
 Top-strategy-proofness assumes the agents only care about manipulation if they can increase the probability of their top alternative. The deterministic version of top-strategy-proofness implies agents manipulate only if, by misreporting, they can ensure their top alternative as the outcome. Thus, top-strategy-proofness is the minimal form of strategy-proofness one may think of.

    Throughout this paper, we assume that the initial endowment is deterministic. This assumption is consistent with the observation that, even when the current endowment is obtained through a probabilistic rule in the past, the very purpose of such a rule is to generate a deterministic realization according to the prescribed probability distribution. Consequently, the outcome after implementation is always deterministic. It is important to emphasize, however, that this does not diminish the relevance of probabilistic rules in ensuring fairness.

\subsection{Contribution of the Paper}\label{se_1.2}

We have four main results in this paper. In Theorem \ref{thm_1}, we provide a characterization of all SD-Pareto efficient, SD-individually rational, and SD-top-strategy-proof probabilistic rules on any \textit{free pair at the top} (FPT) domain by showing that the TTC rule is the unique rule satisfying these properties. A domain is FPT (\cite{sen2011gibbard}) if for any two objects $a$ and $b$, there exists a preference that places $a$ at the best and $b$ at the second-best ranks. FPT domains form a broad class that includes the unrestricted domain and can be substantially smaller in size. While the unrestricted domain requires $n!$ preference orderings, an FPT domain may be constituted with as few as $n(n-1)$ preferences. The linked domain, introduced by \citet{aswal2003dictatorial}, serves as an important example of an FPT domain. Our result generalizes \cite{ma1994strategy}'s result in three directions: first, from the deterministic setup to the probabilistic setup; second, from strategy-proofness to top-strategy-proofness; and third, from the unrestricted domain to any FPT domain. 

Next, we move to SD-pair efficiency and show that the same result continues to hold if we weaken Pareto efficiency to pair efficiency under stochastic dominance, but strengthen the domain slightly by requiring a \textit{free triple at the top} (FTT) domain instead of the FPT domain. An FTT domain requires that for every triplet of objects $a, \;b$, and $c$, there exists a preference with $a$ as the best, $b$ as the second-best, and $c$ as the third-best object in the preference. Formally, Theorem \ref{thm_2} shows that on any FTT domain, a probabilistic rule is SD-pair efficient, SD-individually rational, and SD-top-strategy-proof if and only if it is the TTC rule. As in the previous paragraph, this result generalizes Theorem 1 of \cite{ekici2024pair} in three important ways: first,  it extends from a deterministic framework to a probabilistic one;  second, from SD-strategy-proofness to SD-top-strategy-proofness; and third, from the unrestricted domain to any FPT domain. Finally, Theorem \ref{thm_3} and Theorem \ref{thm_4} consider the ex post notions of efficiency and individual rationality (\cite{abdulkadirouglu1999house}) and show that under these ex post notions, the conclusions of Theorem \ref{thm_1} and Theorem \ref{thm_2} also hold.

\subsection{Related Literature}
There have been quite a few works in the probabilistic setup of the assignment problem with initial endowments. \cite{athanassoglou2011house} show that when initial endowments are probabilistic (also known as fractional endowments), SD-strategy-proofness, SD-Pareto efficiency, and SD-individual rationality are together incompatible when there are at least four agents and objects. In a later paper, \cite{aziz2015generalizing} strengthens the result by weakening SD-strategy-proofness with weak-SD-strategy-proofness. In both their proofs, they started from some specific fractional endowments and arrived at an impossibility.\footnote{See the proofs of Theorem 3 in \cite{athanassoglou2011house} and Theorem 8 in \cite{aziz2015generalizing} for details.} To the best of our knowledge, there is not much known in the probabilistic setup when the feasible set of initial endowments is restricted. 

In a slightly different approach to the probabilistic assignment problem, one of the first papers is \cite{bogomolnaia2001new}. They introduce a new probabilistic rule called Probabilistic Serial (PS rule) and show that it satisfies some nice properties over the Random Priority rule. They also show an important impossibility result that there is no probabilistic rule satisfying  SD-efficiency, equal treatment of equals, and  SD-strategy-proofness. Equal treatment of equals ensures that two agents with the same preference get the same outcome. Later on, \cite{bogomolnaia2012probabilistic} characterize the PS rule in terms of  SD-efficiency, SD-no-envy, and bounded invariance, where SD-envy-free requires that every agent prefers their share over others. Bounded invariance is a weaker notion of SD-strategy-proofness.
\cite{chun2020upper} strengthen the impossibility result of \cite{bogomolnaia2001new} by weakening SD-strategy-proofness to upper-contour strategy-proofness, which only requires that if the upper-contour sets of some objects are the same in two preference relations, then the sum of probabilities assigned to the objects in the two upper-contour sets should be the same. 

\cite{yilmaz2009random} considers probabilistic assignment problems under weak preferences. Their main contribution is a recursive solution for the weak preference domain that satisfies individual rationality, ordinal efficiency and no justified-envy. No justified-envy views an assignment as unfair if an agent does not prefer his consumption to another agent's consumption, and the assignment obtained by swapping their consumptions respects the individual rationality requirement of the latter agent. 

Another paper that considers probabilistic allocations with a deterministic initial allocation is \cite{abdulkadirouglu1999house}. In their setting, there are existing tenants as well as new applicants. They introduce a deterministic rule called top trading cycles with fixed priority, which boils down to the TTC rule if there are only existing tenants. They also consider probabilistic allocations, which are a convex combination of these deterministic rules and show that these convex combination rules are strategy-proof, efficient, and individually rational.

We organize the paper in the remaining sections as follows: Section \ref{sec: prelims} introduces the model and the required definitions. Section \ref{sec: ttc} describes the TTC rule in an algorithmic way. In Section \ref{sec: sd}, we present our two main results under stochastic dominance notions. Further, Section \ref{sec: exp} provides the results on ex post notions of efficiency and individual rationality. Finally, in Section \ref{sec: fd}, we discuss some future directions of this work. We put all the missing proofs in the Appendix.



\section{Preliminaries}\label{sec: prelims}
Let $N=\{1,\ldots,n\}$ be a finite set of agents. Except otherwise mentioned, $n \geq 2$. Let $\objects = \{x_1,\ldots,x_n\}$ be a finite set of objects. A  reflexive, anti-symmetric, transitive, and complete binary relation (also called a linear order) on the set \objects is called a preference on \objects. We denote by $\mathcal{P}$ the set of all preferences on \objects, also called the unrestricted domain. For $P \in \mathcal{P}$ and  $x,y \in \objects$, $xPy$ is interpreted as ``$x$ is as good as (that is, weakly preferred to) $y$ according to $P$". Since $P$ is complete and antisymmetric, for distinct $x$ and $y$, we have either $xPy$ or $yPx$, and in such cases, $xPy$ implies $x$ is strictly preferred to $y$.  For ease of writing, we sometimes write a preference $P$ as $P\equiv xyz\cdots$, implying $x$ is the top-ranked object in $P$, and $y,z$ are the second and third-ranked objects, respectively, in $P$. For $P \in \mathcal{P}$ and $x \in \objects$, the \textit{upper contour set} of $x$ at $P$, denoted by $U(x,P)$, is defined as the set of objects that are as good as $x$ in $P$, i.e., $U(x,P)=\{y \in \objects \mid yPx\}$.\footnote{Observe that $x\in U(x,P)$ by reflexivity.} We let $\mathcal{D} \subseteq \mathcal{P}$ denote a domain (of admissible preferences). An element $P_N = (P_1, \ldots, P_n) \in \mathcal{D}^n$ is referred to as an (admissible) preference profile in the domain $\mathcal{D}$.

\subsection{Probabilistic assignment rules and their properties}

A probabilistic assignment \randoma is a bi-stochastic matrix of order $n$, i.e., \randoma is an $n \times n$ matrix in which every entry is in between $0$ and $1$ and every row and column sums are  $1$ (i.e., $0 \leq \randomaof{i}{j} \leq 1$, $\suml{j \in [n]}{\randomaof{i}{j}}=1$, and $\suml{i \in [n]}{\randomaof{i}{j}}=1$).\footnote{For $p\in \mathbb{N}$, $[p]=\{1,\ldots,p\}$.} We denote by $\mathcal{\randoma}$ the set of all probabilistic assignments. When each entry of a probabilistic assignment \randoma is either zero or one (but not a proper fraction), then it is called a deterministic assignment. Note that a deterministic assignment is a permutation matrix.

The rows of a probabilistic assignment correspond to the agents, and the columns correspond to the objects. For an agent $i$ and object $x$, the value $\randomaof{i}{j}$ denotes the probability with which agent $i$ receives object $x_j$. Using standard matrix notations, we write $\randomaof{i}{\boldsymbol{\bigcdot}} \in  \Delta(\objects)$ to denote the probability distribution in the $i^{\text{th}}$ row of \randoma.\footnote{We use the notation  $\Delta (S)$ to denote the set of probability distributions on a finite set $S$.} Likewise, $\randomaof{\boldsymbol{\bigcdot}}{j} \in \Delta(N)$ denotes the probability distribution in the column corresponding to the object $x_j$. Further, with a slight abuse of the above notations, for any set $S \subseteq \objects$, let $\randomaof{i}{S}$ denote the total probability of the elements in $S$ in the $i^{\text{th}}$ row of \randoma. That is, $\randomaof{i}{S}  = \suml{j \in S}{\randomaof{i}{j}}$. Likewise, for any $S \subseteq N$, we define $\randomaof{S}{j}$ to be the total probability of the elements in $S$ in the $j^{\text{th}}$ column of \randoma.

In this paper, we assume that there is an initial deterministic assignment of the objects to the individuals. Let $E$ (a permutation matrix) denote that initial assignment, also referred to as \textbf{initial endowment}. For notational simplicity, we assume agent $i$ has the object $x_i$ as her initial assignment. This means essentially $E$ is the identity matrix of order $n$. Further, with a slight abuse of notation, sometimes, for convenience, we write the initial endowment $E$ as a vector, i.e., $E=(x_1,\ldots,x_n)$.

A \textbf{probabilistic assignment rule} (henceforth, a PAR) is a function $\rrule: \mathcal{D}^n \to \mathcal{\randoma}$. The entry $\rruleof{i}{x_j}{P_N}$ represents the probability with which agent $i$ receives object $x_j$ at the profile $P_N$. A PAR $\rrule$ is said to be \textbf{deterministic assignment rule} (henceforth, a DAR and typically denoted by $f$) if for every $P_N \in \mathcal{D}^n$, $\rrule(P_N)$ is a deterministic assignment.

Given a preference $P_i \in \mathcal{P}$ of the agent $i$, we say that the agent $i$ \textbf{weakly prefers} a probability distribution $\lambda \in \Delta (\objects)$ over a distribution $\lambda' \in \Delta(\objects)$, if for every $x_j \in \objects$, we have $\lambda(U(x,P_i)) \geq \lambda'(U(x,P_i))$. Likewise, agent $i$ \textbf{strictly prefers} $\lambda$ over $\lambda'$ if it weakly prefers $\lambda$ over $\lambda'$ and there exists some $x \in \objects$ such that $\lambda(U(x,P_i)) > \lambda'(U(x,P_i))$. We write $\lambda \succeq_{P_i} \lambda'$ to indicate that $i$ weakly prefers $\lambda$ over $\lambda'$, and $\lambda \succ_{P_i} \lambda'$ to indicate that $i$ strongly prefers $\lambda$ over $\lambda'$.

We now introduce the standard notion of strategy-proofness in the probabilistic setting, called SD-strategy-proofness, which was first introduced in \cite{gibbard1977manipulation}.

\begin{definition}(SD-Strategy-proofness)\label{def: sp}
A PAR $\rrule$ is \textbf{SD-strategy-proof} if for every $P_N \in \mathcal{D}^n$, every agent $i \in N$, and every $P_i' \in \mathcal{D}$, $$\rruleof{}{}{P_N} \succeq_{P_i} \rruleof{}{}{P_i',\without{}}.$$
\end{definition}

As mentioned in Section \ref{se_intro}, below we formally define a weaker notion of SD-strategy-proofness that we work with in the results. 

\begin{definition}[SD-top-strategy-proofness]\label{def: topsp}
A PAR $\rrule$ is \textbf{SD-top-strategy-proof} if for every $P_N \in \mathcal{D}^n$, every agent $i \in N$, and every $P_i' \in \mathcal{D}$, $$\rruleof{}{\prefof{}{}{}}{P_N} \geq \rruleof{}{\prefof{}{}{}}{P_i',\without{}}.$$
\end{definition}

Both these properties can be defined similarly for the deterministic assignment rules, and we refer to them as strategy-proofness and top-strategy-proofness throughout the rest of the paper. 


\section{Top Trading Cycles (TTC) Rule}\label{sec: ttc}

In this section, we formally define the well-known Top Trading Cycles (TTC) rule. The standard formulation of the TTC rule assumes a deterministic initial endowment of objects among agents. This setting is consistent with the framework adopted in our model, wherein initial endowments are also deterministic. 

Recall that each agent~$i$ is initially endowed with a distinct object~$x_i$. We use $\objects_S$ to denote the set of endowments of all the agents in $S \subseteq N$. We additionally need the following terminology. For a preference $P$ on \objects and a subset $Y$ of \objects, we write $P|_Y$ to refer to the restriction of $P$ to $Y$, that is, $P|_Y$ is the preference on $Y$  such that for all $x,y \in Y$, $xP|_Yy$ if and only if $xPy$. For a preference profile $P_N$, we write $P_N|_Y$ to refer to the profile in which all the preferences in $P_N$ are restricted to $Y$.


We now introduce a particular type of graph. Let $S \subseteq N$ and $P_i$ be a preference on $X_S$ for an agent $i \in S$. The graph $\mathcal{G}(P_S)$ is defined as follows: the set of nodes is $S$ and there is an edge $(i,j)$ if and only if $P_i=x_j\cdots$. As $N$ is a finite set, the graph $\mathcal{G}(P_N)$ will have at least one cycle. 

We are now ready to define the TTC rule. Consider a preference profile $P_N$. The TTC rule is a deterministic assignment rule whose step-by-step description at this profile is as follows:
\begin{enumerate}
    \item \label{nodes} Round 1: Let $\objects^1=\objects$,  $N^1=N$, and $P^1_{N^1}=P_N$. Consider the graph  $\mathcal{G}^1(P^1_{N^1})$. 
    Let $C=(i_1,\ldots, i_k,i_1)$ be the smallest cycle in the graph. Assign $\dendow{j+1}$ to agent $j$ for all $j \in \{1,\ldots,k\}$ (where $i_{k+1}=i_1$). 
    
    \item  \label{intialize}  Round 2: Let  ${N}^2$ be the set of remaining agents (that is, the agents who did not belong to any cycle in the first round). Let $\objects^2=X_{{N}^2}$ and  ${P}^2_{{N}^2}$ be the reduced preference profile of the agents in $N_2$ to the set of objects in $\objects^2$, that is, ${P}^2_i=P_i^1|_{\objects^2}$. Now, consider the graph $\mathcal{G}^2(P^2_{N^2})$ and repeat Round 1.

    
    This continues till all the objects are allocated to some agents. Note that this always results in a deterministic assignment.
    
\end{enumerate}

The following theorem is due to \cite{ma1994strategy}. Later on, shorter proofs are found by \cite{svensson1994queue}, \cite{anno2015short}, and \cite{sethuraman2016alternative}.
\begin{theorem*}
On the unrestricted domain, TTC is the unique deterministic assignment rule that satisfies Pareto-efficiency, individual rationality, and strategy-proofness. 
\end{theorem*}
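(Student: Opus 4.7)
The plan has two parts: first verify that TTC satisfies the three properties, then establish uniqueness.

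\textbf{Forward direction.} For individual rationality, I would observe that agent $i$'s endowment $x_i$ is never assigned to any other agent before $i$ is assigned, because $i$ remains a valid cycle-participant (and in particular a potential trader of $x_i$) until the round in which $i$ is placed in a cycle. Hence when $i$ is assigned, $x_i$ is still available, so $i$ receives the top-ranked object among a set containing $x_i$, which is weakly preferred to $x_i$. For Pareto efficiency, I would argue round-by-round: agents assigned in round $1$ each get their overall top object and cannot be made strictly better off, so any Pareto improvement can only reshuffle objects assigned in rounds $\ge 2$; inductively, the TTC allocation on the reduced instance (agents and objects surviving round $1$) is Pareto efficient relative to those preferences. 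For strategy-proofness, the key observation is that any object an agent $i$ assigned in round $r$ can obtain by misreporting must already be owned by some agent assigned in a round $\le r$; since TTC gives $i$ the best object available in round $r$ (which contains $x_i$ together with the objects still unassigned), no misreport can produce a strictly preferred outcome.

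\textbf{Uniqueness.} Let $f$ be a DAR satisfying Pareto efficiency, IR, and strategy-proofness. I would use induction on the number of agents, the crux being to show that at every profile $P_N$, every agent in a round-$1$ TTC cycle receives, under $f$, the same object as under TTC; the inductive step then follows by restricting $f$ to the agents and objects remaining after round $1$, noting that the restriction inherits all three properties on the smaller problem. For the round-$1$ argument, fix a cycle $C=(i_1,\ldots,i_k,i_1)$ with $x_{i_{j+1}}$ being the top of $P_{i_j}$ (indices mod $k$). Consider the auxiliary profile in which each $i_j\in C$ reports $P'_{i_j}\equiv x_{i_{j+1}}\,x_{i_j}\,\cdots$ (same top, endowment second, arbitrary below), and all other agents keep their preferences. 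IR forces $f_{i_j}\in\{x_{i_{j+1}},x_{i_j}\}$ for each $j\in C$. A counting/conflict argument then yields a dichotomy: if any $i_j$ receives $x_{i_{j+1}}$, then $i_{j+1}$ cannot receive $x_{i_{j+1}}$ and so by IR must receive $x_{i_{j+2}}$, and this cascades around $C$; thus either all of $C$ receive their tops or all receive their endowments. The latter is Pareto-dominated by the cyclic swap (everyone in $C$ strictly improves), so PE at the auxiliary profile rules it out. Finally, flipping the agents of $C$ back to their true preferences one at a time, strategy-proofness in both directions—together with the fact that $x_{i_{j+1}}$ is top under both $P'_{i_j}$ and $P_{i_j}$—pins down $f_{i_j}=x_{i_{j+1}}$ at each intermediate profile, and the conflict-and-IR cascade keeps the remaining cycle members on their tops throughout the switches, yielding the desired allocation at $P_N$ itself.

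\textbf{Main obstacle.} The delicate step is the base case, i.e., forcing the round-$1$ cycle allocation under $f$. Two features make it nontrivial: (i) IR at $P_N$ alone only says $f_{i_j}(P_N)\,R_{i_j}\,x_{i_j}$, which is too weak to pin down the top object; and (ii) Pareto efficiency at $P_N$ does not directly produce the cyclic swap because the objects that $f(P_N)$ assigns to members of $C$ need not coincide with $\{x_{i_j}:j\in C\}$, so there is no ``free'' Pareto-improving exchange inside $C$. The detour through the auxiliary profile $P'$ is what resolves both issues—IR becomes strong enough to force a binary choice, PE supplies the dichotomy, and strategy-proofness then transports the conclusion back to $P_N$. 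Keeping track of the cascade in the transport step (so that every intermediate profile still satisfies the conflict argument for the remaining cycle members) is the bookkeeping most prone to error and the place I would write out most carefully.
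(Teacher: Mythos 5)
Your route is essentially the one the paper itself uses: this classical result is obtained there as the deterministic specialization (and corollary) of Theorem~\ref{thm_1}, whose proof proceeds exactly via your auxiliary profile $P'_{i_j}\equiv x_{i_{j+1}}x_{i_j}\cdots$, the IR-forced binary choice, the Pareto argument ruling out the all-endowments outcome, and the one-agent-at-a-time transport back to $P_N$ using (top-)strategy-proofness. The forward direction and the dichotomy/cascade at the auxiliary profile are correct as sketched.

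There is, however, one genuine gap in your transport step, and it is precisely the point you flag as ``bookkeeping'': at an intermediate profile $P^{l+1}$ (agents $1,\ldots,l+1$ back on their true preferences, agents $l+2,\ldots,k$ still on the auxiliary ones), strategy-proofness pins down agent $l+1$, and the IR cascade pins down agents $l+2,\ldots,k$ because their reported second-ranked object is their own endowment. But for the \emph{already-switched} agents $1,\ldots,l$, IR under their true preferences only gives $f_{i_j}(P^{l+1})\,R_{i_j}\,x_{i_j}$, and since $x_{i_j}$ may now be assigned elsewhere this permits $i_j$ to receive an arbitrary object above $x_{i_j}$ in $P_{i_j}$, possibly outside the cycle; the ``conflict-and-IR cascade'' does not reach them, and Pareto efficiency at $P^{l+1}$ does not obviously produce an improving swap either. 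The paper closes this with an extra move you are missing: for each such agent, say agent $1$, replace $P_1$ by $\bar{P}_1\equiv x_2x_1\cdots$ (legitimate because nothing in the argument used more than the top of $P_1$), rerun the cascade at $(\bar{P}_1,P^{l+1}_{-1})$ to conclude agent $k$ receives $x_1$ there, so IR under $\bar{P}_1$ forces agent $1$ to receive $x_2$, and then transport this back to $P^{l+1}_N$ by strategy-proofness (otherwise agent $1$ would deviate to $\bar{P}_1$ to obtain her top). Without this detour the claimed invariant ``all cycle members keep their tops at every intermediate profile'' is asserted but not proved.
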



\section{Results under Stochastic Dominance notions}\label{sec: sd}

As mentioned in Section~\ref{se_intro}, the primary objective of this paper is to investigate the structure of desirable probabilistic assignment rules (PARs). Broadly, there exist two principal approaches to defining properties for PARs, namely, the stochastic dominance (SD) approach and the ex-post approach. While the SD approach is standard in the literature on probabilistic mechanism design, the ex-post approach, employed in~\cite{abdulkadirouglu1999house}, is based on convex combinations of the corresponding deterministic properties. We adopt the SD framework in this section and turn to the ex-post framework in the subsequent section.

We consider two notions of SD-efficiency, both of which extend the classical definition of efficiency to probabilistic outcomes. The first one is \textit{SD-Pareto efficiency}, where a probabilistic outcome is considered more efficient if there exists a subset of agents who strictly prefer it, while all remaining agents are indifferent.\footnote{This notion is also known in the literature as ordinal efficiency.} 
The second notion, which we refer to as \emph{SD-pair efficiency}, is a weaker criterion. An outcome is SD-pair efficient over another if exactly two agents strictly prefer it over the other, and all other agents have the same assignment in both. Clearly, SD-pair efficiency is a special case of  SD-Pareto efficiency. Consequently, if an outcome is SD-Pareto efficient, it is also pairwise Pareto efficient. However, the converse does not necessarily hold. It is worth emphasizing that, while SD-Pareto efficiency is already well established in the literature, to the best of our knowledge, the notion of SD-pair efficiency is introduced for the first time in this paper.

\subsection{Results under SD-Pareto Efficiency} 
We first present a formal definition of SD-Pareto efficiency.

\begin{definition}(SD-Pareto efficiency)\label{def: pe} A probabilistic assignment \randoma is SD-Pareto dominated by another probabilistic assignment $\randoma'$ at a profile $P_N$ if $$\randoma'_{i\boldsymbol{\bigcdot}} \succeq_{P_i} \randoma_{i\boldsymbol{\bigcdot}} \text{ for all } i\in N \text{ and } \randoma'_{j\boldsymbol{\bigcdot}} \succ_{P_j} \randoma_{j\boldsymbol{\bigcdot}} \text{ for some } j\in N.$$  A probabilistic assignment is said to be SD-Pareto efficient if it is not SD-Pareto dominated by some other probabilistic assignment. 
A PAR $\rrule$ is \textbf{SD-Pareto-efficient} if for all $P_N\in \mathcal{D}^n$, $\rrule(P_N)$ is  SD-Pareto efficient at $P_N$. 	
\end{definition}

Next, we formally define the stochastic dominance version of individual rationality.

\begin{definition}(SD-individual Rationality)\label{def: ir}
 A PAR $\rrule$ is \textbf{SD-individually rational} if for every $P_N \in \mathcal{D}^n$ and every agent $i \in N$, we have $$\rruleof{}{}{P_N} \succeq_{P_i} \rendow{}{} \;\;.$$
\end{definition}

Similarly to before, when we talk about SD-Pareto efficiency and SD-individual rationality in the deterministic setting, we call them Pareto efficiency and individual rationality. To state our main result in this section, we define a particular type of domain that was introduced in \cite{sen2011gibbard}.

\begin{definition}
    A domain $\mathcal{D}$ is a Free Pair at the Top (FPT) domain if for all distinct $x,y\in X$ there exists $P\in \mathcal{D}$ such that $P\equiv xy\cdots$.
\end{definition}

Note that the unrestricted domain $\mathcal{P}$ is an FPT domain. From \cite{ma1994strategy}'s result, we know that, on the unrestricted domain, the TTC is the unique Pareto efficient, individually rational, and strategy-proof DAR when the endowments are deterministic. In this section, we extend this result to probabilistic assignments and prove that even in such a setting, on any FPT domain, TTC continues to be the only SD-Pareto-efficient, SD-individually rational, and SD-top-strategy-proof assignment rule starting from deterministic initial endowments.

\begin{theorem}\label{thm_1}
    Let $\mathcal{D}$ be an FPT domain. Then, a PAR on $\mathcal{D}$ is  SD-Pareto efficient, SD-individually rational, and SD-top-strategy-proof if and only if it is the TTC rule. 
\end{theorem}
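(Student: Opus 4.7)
The forward direction is routine: TTC is a deterministic rule, and by the classical theorem of \cite{ma1994strategy} it is Pareto efficient, individually rational, and strategy-proof on the unrestricted domain; each of these descends to the corresponding SD-property on any FPT subdomain.

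For the reverse direction, let $\rrule$ satisfy SD-Pareto efficiency, SD-individual rationality, and SD-top-strategy-proofness on an FPT domain $\mathcal{D}$. A useful preliminary, obtained by applying SD-top-strategy-proofness to $(P_i,P_i')$ and then to $(P_i',P_i)$ when the two have the same top, is that $\rrule_{i,P_i(1)}(P_N)$ depends on $P_i$ only through $P_i(1)$. The plan is to induct on $n$, with the inductive step resting on the following key lemma: if $C$ is the smallest first-round cycle of TTC at $P_N$, then $\rrule_{i,P_i(1)}(P_N)=1$ for every $i\in C$.

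The proof of the key lemma would proceed in three stages. First, using FPT, pick for each $i\in C$ a canonical preference $\hat P_i\equiv P_i(1)\,x_i\,\cdots\in \mathcal{D}$ and work at the modified profile $\hat P_C\cup P_{N\setminus C}$; SD-individual rationality forces $\rrule_{i,\{P_i(1),x_i\}}=1$ for each $i\in C$, so the rows indexed by $C$ consume the full mass of $X_C$, and column sums then yield $\rrule_{j,X_C}=0$ for every $j\notin C$. Second, inside the resulting $|C|\times|C|$ bi-stochastic sub-matrix, writing $\alpha_j=\rrule_{i_j,x_{i_{j+1}}}$ and chasing the column-sum identity $\alpha_{j-1}+(1-\alpha_j)=1$ around the cycle collapses all $\alpha_j$ to a single common value $\alpha$; comparison with the deterministic TTC outcome on $C$ (patched with $\rrule$'s allocation outside $C$) then forces $\alpha=1$ via SD-Pareto efficiency. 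Third, revert $\hat P_i$ to $P_i$ one agent at a time: since $\hat P_i(1)=P_i(1)$, top-strategy-proofness preserves $\rrule_{i,P_i(1)}=1$ for the just-reverted agent, and the SD-IR constraints of the still-canonical agents together with column sums propagate ``top-with-probability-one'' around the cycle to every remaining $i\in C$.

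Given the key lemma, the induction closes as follows. At $P_N$, the agents in $C$ absorb $X_C$ fully, so the restriction of $\rrule(P_N)$ to $(N\setminus C)\times(X\setminus X_C)$ is a bi-stochastic sub-matrix; the restricted domain $\mathcal{D}|_{X\setminus X_C}$ is FPT because any $P\in \mathcal{D}$ with $P\equiv xy\cdots$ for $x,y\in X\setminus X_C$ restricts to $xy\cdots$ on $X\setminus X_C$. The sub-rule, defined by extending each restricted profile via FPT extensions whose full top coincides with the restricted top, inherits SD-Pareto efficiency and SD-individual rationality directly, while restricted SD-top-strategy-proofness follows by applying full top-SP between two extensions with matching restricted tops. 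The main obstacle I expect is verifying the sub-rule's well-definedness and ensuring that the original profile $P_N$ corresponds to a profile on which the sub-rule agrees with $\rrule(P_N)|_{(N\setminus C)\times(X\setminus X_C)}$: FPT only guarantees the required top-two pattern and not arbitrary tails, so careful bookkeeping is needed to show that different FPT-admissible extensions with the same top yield the same restricted sub-allocation, and to handle agents whose overall top $P_j(1)$ lies in $X_C$ and so differs from their restricted top. This delicate interplay between FPT and top-strategy-proofness at the reduction step is the most technically demanding part of the argument.
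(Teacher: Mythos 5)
Your proposal follows essentially the same route as the paper's proof: the canonical FPT profile placing each cycle member's own endowment second, SD-individual rationality plus column sums to trap $X_C$ inside $C$, SD-Pareto efficiency to force the trade, sequential reversion of preferences via SD-top-strategy-proofness (your preliminary observation that the top probability depends only on the reported top is exactly the paper's auxiliary-preference trick for the already-reverted agents), and induction on $n$ for the remaining agents. The well-definedness worry you raise at the reduction step concerns a point the paper itself passes over tersely, so it is not a divergence from, or a gap relative to, the published argument.
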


As a corollary of Theorem \ref{thm_1}, we get \cite{ma1994strategy}'s result stated in the previous section.

\subsection{Results under SD-pair efficiency}
The notion of pair efficiency was introduced in \cite{ekici2024pair}. We extend this in the probabilistic setting using stochastic dominance. The notion of SD-pair-efficiency requires that no pair of agents can be made strictly better off with respect to stochastic dominance by reallocating their assigned outcomes among themselves, while the allocations of all other agents remain unchanged. Below we formally define it.

\begin{definition}(SD-pair-efficiency)\label{def: paire} 
A probabilistic assignment \randoma is SD-pair dominated at a profile $P_N$ by another probabilistic assignment \randoma'  if there exist $i,j\in N$ such that
\begin{enumerate}
\item [(i)] $\randoma'_{l\boldsymbol{\bigcdot}} \succeq_{P_l} \randoma_{l\boldsymbol{\bigcdot}}$  for all  $l\in \{i,j\}$ with  $\randoma'_{h\boldsymbol{\bigcdot}} \succ_{P_h} \randoma_{h\boldsymbol{\bigcdot}}$  for some  $h\in \{i,j\}$, and
    \item [(ii)] $\randoma_{k\boldsymbol{\bigcdot}}=\randoma'_{k\boldsymbol{\bigcdot}}$ for all $k\neq i,j$.
     \end{enumerate}

A probabilistic assignment is said to be SD-pair efficient if it is not SD-pair dominated by any other probabilistic assignment. A probabilistic assignment rule $\rrule$ is \textbf{SD-pair efficient} if for all $P_N\in \mathcal{D}^n$, $\rrule(P_N)$ is SD-pair efficient at $P_N$.
\end{definition}

\begin{remark}\label{rem_1}
    As we are dealing with strict preferences in this paper, the notion of SD-pair domination becomes equivalent to the following. A probabilistic assignment \randoma is SD-pair dominated at a profile $P_N$ by another probabilistic assignment \randoma'  if there exists $(i,j)\in N^2$  such that
\begin{enumerate}
\item [(i)] $\randoma'_{l\boldsymbol{\bigcdot}} \succ_{P_l} \randoma_{l\boldsymbol{\bigcdot}}$  for all  $l\in \{i,j\}$, and
    \item [(ii)] $\randoma_{k\boldsymbol{\bigcdot}}=\randoma'_{k\boldsymbol{\bigcdot}}$ for all $k\neq i,j$. 
     \end{enumerate}
     \hfill $\square$
\end{remark}

\cite{ekici2024pair} shows that pair efficiency (SD-pair efficiency in the deterministic setting) is much weaker than Pareto efficiency by providing an example (see Example 1 in \cite{ekici2024pair}) of a preference profile with at least seven agents that has only one Pareto efficient outcome, but more than $2^n$ pair efficient outcomes. We strengthen the ground even further by providing an example of a preference profile with at least three agents (notice that for two agents, SD-Pareto efficiency and SD-pair efficiency are equivalent) that has a unique SD-Pareto efficient outcome but infinitely many SD-pair efficient outcomes. 

\begin{example}\label{example_1}
    Assume $n>2$, and for simplicity we write $x_{n+s}=x_s$ for any $s\in \{1,\ldots,n\}$. Consider the following preference profile $P_N$.
\begin{center}
\[
\begin{tabular}{cccccc}
$P_1$ & $P_2$ & $P_3$ & $P_4$ & $\cdots$ & $P_n$ \\
\hline
$x_1$ & $x_2$ & $x_3$ & $x_4$ & $\cdots$ & $x_n$ \\
$x_2$ & $x_3$ & $x_4$ & $x_5$ & $\cdots$ & $x_1$ \\
$\vdots$ & $\vdots$ & $\vdots$ & $\vdots$ & $\cdots$ & $\vdots$ \\
\end{tabular}
\]
\end{center}

Now, for $b\in [0,1]$, consider the following probabilistic assignment $A^b$, where for all $i\in N$ 
\begin{align*}
     A^b_{ix_i}=b \text{ and } A^b_{ix_{i+1}}=1-b.
\end{align*}



Note that if $b<1$, $A^b$ is SD-Pareto dominated by $A^1$. Further, $A^1$ is the only allocation that is SD-Pareto efficient at $P_N$. However, each  $A^b$ is indeed SD-pair efficient at $P_N$. To see this, assume for contradiction that for some $b\in [0,1]$, $A^b$ is SD-pair dominated by another assignment $B$. This means there exists a pair of agents $(r,s)$ such that $A^b_{i\bigcdot}=B_{i\bigcdot}$ for all $i\notin \{r,s\}$ and $B_{i\bigcdot}\succ_{P_i}A^b_{i\bigcdot}$ for all $i\in \{r,s\}$. This, together with Remark \ref{rem_1} and the fact that $A^b$ assigns probabilities only to the top two objects of $P_r$ and $P_s$, it must be that under $B$, the probabilities of the top object of $P_r$ and $P_s$ is strictly more than that under $A^b$, i.e., $B_{rx_r}>b$ and $B_{sx_s}>b$. However, this implies that the assignments of agents $r-1$ and $s-1$ in $B$ must also differ from those in $A^b$ to maintain the bi-stochastic structure. Since $A^b_{i\bigcdot} = B_{i\bigcdot}$ for all $i \notin {r, s}$ and $n > 2$, we have a contradiction. 

This shows that there are infinitely many SD-pair efficient allocations (one for each value of $b$) at the profile $P_N$. 
\hfill $\square$
\end{example}





The following domain condition was also introduced in \cite{sen2011gibbard}. It is stronger than the FPT condition defined in the previous subsection.  
\begin{definition}
    A domain $\mathcal{D}$ is a Free Triple at the Top (FTT) domain if for all distinct $x,y,z\in X$ there exists $P\in \mathcal{D}$ such that $P\equiv xyz\cdots$.
\end{definition}

Our next result shows that on an FTT domain, a PAR satisfies SD-pair efficiency, SD-individual rationality, and SD-strategy-proofness if and only if it is the TTC rule.

\begin{theorem}\label{thm_2} Let $\mathcal{D}$ be a FTT domain. Then, a PAR  on $\mathcal{D}$ is  SD-pair-efficient, SD-individually rational, and SD-top-strategy-proof if and only if it is the TTC rule. 
\end{theorem}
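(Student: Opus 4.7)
The proof splits into the easy ``if'' direction and the harder ``only if'' direction. For the ``if'' direction, the TTC rule is deterministic, so its SD-pair-efficiency, SD-individual rationality, and SD-top-strategy-proofness reduce to the classical facts that TTC is Pareto (hence pair) efficient, individually rational, and strategy-proof in the deterministic sense.

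For the ``only if'' direction, let $\varphi$ satisfy all three hypotheses; the goal is to show $\varphi(P_N)=\mathrm{TTC}(P_N)$ for every $P_N\in\mathcal{D}^n$, which I plan to establish by induction on the rounds of TTC. A key preliminary is a \emph{tops-only lemma}: applying SD-top-strategy-proofness in both directions between two reports $P_i,P'_i$ of agent $i$ with a common top object $x$ yields $\varphi_{i,x}(P_i,P_{-i})=\varphi_{i,x}(P'_i,P_{-i})$, so the probability with which agent $i$ receives her top depends on her report only through the top itself.

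For the base step, fix a cycle $C=(i_1,\ldots,i_k,i_1)$ formed in the first round of TTC at $P_N$, where $P_{i_j}$ tops $x_{i_{j+1}}$. SD-individual rationality confines $\varphi_{i_j}(P_N)$ to $U(x_{i_j},P_{i_j})$. The case $k=2$ is immediate: pair efficiency applied to $(i_1,i_2)$, on the joint support $\{x_{i_1},x_{i_2}\}$ and together with bi-stochasticity, forces each agent to receive the partner's endowment with probability one. For $k\geq 3$ the argument is subtler, since pair efficiency among the cycle members alone is too weak (cf.\ Example~\ref{example_1}). Here the FTT condition is invoked to construct a modified profile in which $P_{i_2}$ is replaced by an admissible $P'_{i_2}\equiv x_{i_3}\,x_{i_1}\,z\cdots$---same top, but with the endowment $x_{i_1}$ of the cycle-predecessor $i_1$ promoted to the second rank (with $z$ chosen to fit an FTT preference). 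The tops-only lemma preserves the relevant top probabilities across the original and modified profiles, and combined with column-sum constraints and SD-individual rationality at the modified profile, the cycle-members' allocations are pinned down up to a single parameter $a$, the common probability that each member keeps her endowment. Finally, SD-pair-efficiency applied to the pair $(i_1,i_2)$ at the modified profile exhibits a feasible swap strictly better for both whenever $a>0$: agent $i_1$ shifts mass toward her true top $x_{i_2}$, while agent $i_2$ gains through the newly second-ranked $x_{i_1}$. This contradicts pair efficiency, forces $a=0$, and completes the base step.

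Once the round-1 cycle agents are shown to receive their TTC outcomes with probability one, the problem collapses to an assignment problem on the remaining agents and objects; the restricted domain is again FTT, and the induced rule inherits all three hypotheses, so induction closes the argument. The principal obstacle throughout is the $k\geq 3$ case: pair efficiency restricted to cycle members is too coarse to pin down the allocation by itself, and the fix is the FTT-enabled preference deviation that ``linearizes'' the cyclic interaction so that a carefully chosen pairwise swap becomes strictly improving, with the tops-only lemma transporting the conclusion back to the original profile $P_N$.
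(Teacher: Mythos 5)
Your overall strategy---use FTT to modify preferences so that a pairwise swap becomes strictly improving, transport conclusions with top-strategy-proofness, and induct on the reduced problem---is the same one the paper uses, but the execution of the base step has a genuine gap. Modifying only agent $i_2$'s preference does not pin down the cycle members' allocations ``up to a single parameter $a$.'' The other cycle agents $i_1,i_3,\ldots,i_k$ keep their original preferences, and SD-individual rationality only confines agent $i_j$ to $U(x_{i_j},P_{i_j})$, which may contain many objects outside the cycle (e.g.\ $P_{i_1}\equiv x_{i_2}\,y\,x_{i_1}\cdots$ permits mass on $y$). Column sums do not rule this out, so the cycle's endowments need not even circulate within the cycle, the one-parameter reduction fails, and the particular swap you describe between $(i_1,i_2)$ (each giving up her own endowment) need not be feasible---$i_2$ may hold none of $x_{i_2}$ while the ``slack'' sits on an outside object held by $i_1$, against which a two-agent swap is powerless. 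The same issue already affects your $k=2$ case, which is not immediate at the original profile for the same reason. Finally, the tops-only lemma only controls the \emph{modified} agent's top probability, so even a successful argument at the modified profile would transport back to $P_N$ only the statement $\varphi_{i_2,x_{i_3}}(P_N)=1$, not the whole cycle's allocation; you would need to iterate over every cycle member or revert preferences sequentially.

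The paper's Lemma~\ref{lemma_1} repairs exactly this: it replaces the preferences of \emph{all} cycle agents simultaneously by the FTT-admissible profile $\bar P_i\equiv x_{i+1}x_1x_i\cdots$ (with $\bar P_1\equiv x_2x_1\cdots$ and $\bar P_k\equiv x_1x_k\cdots$). SD-individual rationality then confines every cycle agent to at most three cycle objects, and a column-sum counting argument (Claim~\ref{cl_1}) forces the cycle's endowments to stay inside the cycle. Because every middle agent ranks $x_1$ second, any residual mass of $x_1$ held by some $i<k$ can be pushed to $i+1$ by a strictly improving pairwise swap (Claim~\ref{cl_2}), so $x_1$ ends up entirely with agent $k$ and the allocation unravels to the TTC outcome. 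The original profile is then recovered by reverting one preference at a time, using SD-top-strategy-proofness for the reverted agent and re-invoking the lemma (with the parameter $l$ shifted) for the rest. If you restructure your base step along these lines---modify the whole cycle, not one agent---your argument goes through.
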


\begin{remark}\label{rem_2}
    As SD-Pareto efficiency is stronger than SD-pair efficiency, one might think that Theorem \ref{thm_1} is a corollary of Theorem \ref{thm_2}. But this is not true, as Theorem \ref{thm_1} holds for FPT domains, whereas Theorem \ref{thm_2} requires an FTT domain, a stronger domain condition. 
\end{remark}

\cite{ekici2024pair} shows that, on the full preference domain $\mathcal{P}$, the TTC rule is the unique deterministic rule that satisfies pairwise efficiency, individual rationality, and strategy-proofness. We obtain this result as a corollary of Theorem~\ref{thm_2}.

\section{Results under Ex-post Notions}\label{sec: exp}
It is well-known that every bi-stochastic matrix can be written as a convex combination of permutation matrices (\cite{birkhoff1946three,von1950certain,budish2013designing}). In view of this, there is another way of defining Pareto-efficiency and Individual rationality in the probabilistic setting as introduced in  \cite{abdulkadirouglu1999house}.

\subsection{Results under Ex-post Pareto Efficiency}

We first define ex post Pareto efficiency and ex post individual rationality formally.

\begin{definition}(Ex post Pareto efficiency)\label{def: epe} A probabilistic assignment \randoma is Ex post Pareto efficient at a profile $P_N$ if it can be written as a convex combination of deterministic Pareto efficient assignments at $P_N$.
 
A PAR $\varphi:\mathcal{D}^n \to \mathcal{A}$ is \textit{ex post Pareto efficient} if for all $P_N\in \mathcal{D}^n$, $\varphi(P_N)$ is ex post Pareto efficient.	
\end{definition}

\begin{definition}(Ex post individual rationality)\label{def: eir} A PAR $\varphi:\mathcal{D}^n \to \mathcal{A}$ is \textit{ex post individually rational} (ex post IR) if for all $P_N\in \mathcal{D}^n$, $\varphi(P_N)$ is a convex combination of deterministic individually rational assignments in $P_N$.
\end{definition}

It is quite straightforward to see that SD-individual rationality is equivalent to ex post individual rationality when the initial endowment is deterministic.\footnote{{We provide a proof of this statement in Appendix \ref{appen_C} for completeness.}} However, SD-Pareto efficiency and ex post Pareto efficiency are not equivalent, in general.\footnote{{\cite{cho2016equivalence} provide a condition on a preference profile that is necessary and sufficient for the two notions to be equivalent.}} {It is known that an ex post Pareto efficient probabilistic assignment may fail to be SD-Pareto efficiency when $n\geq 4$, but every SD-Pareto probabilistic efficient assignment is ex post Pareto efficient.\footnote{{This is proved in \cite{bogomolnaia2001new}, Lemma 2-(ii).}} Below, we present an example in which a probabilistic outcome at a given preference profile is ex post efficient Pareto efficient, i.e, can be expressed as a convex combination of deterministic Pareto outcomes at the same profile. However, the probabilistic outcome is SD-Pareto dominated by another outcome at that profile, and hence, it is not SD-Pareto efficient.}\footnote{Similar examples can be found in \cite{bogomolnaia2001new} and \cite{abdulkadirouglu2003ordinal} (see Example 2).}  

\begin{example}\label{example_2}
   Suppose there are four agents, say $\{1,2,3,4\}$, and four alternatives, say $\{a,b,c,d\}$. Consider the preference profile $(P_1,P_2,P_3,P_4)$ shown in Table \ref{table_1}.
	\begin{table}
\begin{center}
\begin{tabular}{||c c c c||} 
		\hline
		$P_1$ & $P_2$ & $P_3$ & $P_4$ \\ [0.5ex] 
		\hline\hline
		$c$ & $a$ & {$a$} & {$c$} \\ 
		\hline
		{$a$} & {$c$} & {$b$} & {$d$} \\ 
		\hline
		{$b$} & {$d$} & $c$ & $a$ \\ 
		\hline
		$d$ & $b$ & $d$ & $b$ \\ 
		\hline
	\end{tabular}
    \caption{Table 1}\label{table_1}
\end{center}
\end{table}
Now, consider the probabilistic assignment $A$ given in the following table, where the rows represent the agents and the columns represent the objects, and the cells have the assignment probabilities.  
\[A=
\begin{blockarray}{ccccc}
	& a & b & c & d  \\
	\begin{block}{c[cccc]}
		1 & 	\tfrac{1}{2} & \tfrac{1}{2} & 0 & 0 \\
		2 & 0 & 0 & \tfrac{1}{2} & \tfrac{1}{2} \\
		3 & \tfrac{1}{2} & \tfrac{1}{2} & 0 & 0 \\
		4 & 0 & 0 & \tfrac{1}{2} & \tfrac{1}{2} \\
			\end{block}
\end{blockarray}
\]

First, observe that $A$ is SD-Pareto dominated by $B$ (given in Table \ref{table_2}), where the changes are marked in red. Thus, $A$ is not SD-Pareto efficient. 

\[B=
\begin{blockarray}{ccccc}
	& a & b & c & d  \\
	\begin{block}{c[cccc]}
		1 & \textcolor{red}{0} & \tfrac{1}{2} & 	\textcolor{red}{\tfrac{1}{2}} & 0 \\
		2 & 	\textcolor{red}{\tfrac{1}{2}} & 0 & \textcolor{red}{0} & \tfrac{1}{2} \\
		3 & \tfrac{1}{2} & \tfrac{1}{2} & 0 & 0 \\
		4 & 0 & 0 & \tfrac{1}{2} & \tfrac{1}{2} \\
	\end{block}
\end{blockarray}\label{table_2}
\]

However, in the following, we show that $A$ is ex post Pareto efficient by showing that $A$ can be decomposed into two deterministic assignments where both are Pareto-efficient at $(P_1,P_2,P_3,P_4)$. Consider the two deterministic assignments, $C$ and $D$, given in Table \ref{table_3}.

\begin{minipage}[t]{0.45\textwidth}
\centering
C\;=\begin{blockarray}{ccccc}
  & a & b & c & d \\
  \begin{block}{r[cccc]}
   1 & 1 & 0 & 0 & 0 \\
		2 &  0 & 0 & 0 & 1 \\
		3 & 0 & 1 & 0 & 0 \\
		4 & 0 & 0 & 1 & 0 \\
  \end{block}
\end{blockarray}
\end{minipage}%
\hfill 
\begin{minipage}[t]{0.45\textwidth}
\centering
D\;=\begin{blockarray}{ccccc}
  & a & b & c & d \\
  \begin{block}{r[cccc]}
   1 & 0 & 1 & 0 & 0 \\
		2 &  0 & 0 & 1 & 0 \\
		3 & 1 & 0 & 0 & 0 \\
		4 & 0 & 0 & 0 & 1 \\
  \end{block}
\end{blockarray}\label{table_3}
\end{minipage}

It can be seen that both $C$ and $D$ are Pareto-efficient at $(P_1,P_2,P_3,P_4)$ as $C$ is the outcome of the TTC rule at $(P_1,P_2,P_3,P_4)$ with the initial endowment $(a,d,b,c)$ and  $D$ is the outcome of the TTC rule at $(P_1,P_2,P_3,P_4)$ with the initial endowment $(b,c,a,d)$.\footnote{Recall that the initial endowment $(a,d,b,c)$ means agent $1$ is endowed with $a$, agent $2$ is endowed with $d$, and so on.} Thus, $A$ is ex post Pareto efficient. \hfill $\square$
\end{example}

Nevertheless, Theorem \ref{thm_3} shows that even under ex post Pareto efficiency and ex post individual rationality, the conclusion of Theorem \ref{thm_1} still holds.

\begin{theorem}\label{thm_3}
    Let $\mathcal{D}$ be an FPT domain. Then, a PAR on $\mathcal{D}$ is ex post Pareto efficient, ex post IR, and SD-top-strategy-proof if and only if it is the TTC rule. 
\end{theorem}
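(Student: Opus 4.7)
The sufficiency direction is immediate: the TTC rule is deterministic and satisfies Pareto efficiency, individual rationality, and strategy-proofness by \citet{ma1994strategy}, and these properties trivially imply ex post PE, ex post IR, and SD-top-strategy-proofness.

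For the necessity direction, let \rrule\ be a PAR on the FPT domain $\mathcal{D}$ satisfying the three properties. A first reduction is that ex post IR is equivalent to SD-IR when the initial endowment is deterministic (as noted in the excerpt and proved in Appendix~\ref{appen_C}), so \rrule\ is SD-individually rational. A useful further reduction is that any ex post PE decomposition $\rrule(P_N)=\sum_k \lambda_k f_k$ can be taken to consist of deterministic PE+IR allocations: if some $f_k$ with $\lambda_k>0$ assigned some agent $i$ an object outside $U(x_i,P_i)$, then the probability $\rrule_i(U(x_i,P_i),P_N)$ would drop below $1$, contradicting SD-IR. Since ex post PE does not in general imply SD-PE (Example~\ref{example_2}), I cannot simply quote Theorem~\ref{thm_1}; instead my plan is to mirror the structure of its proof, proceeding by induction on the TTC rounds and showing that at every round the agents of the current top cycle receive their TTC outcome with probability one.

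For the first round, let $C_1=(i_1,\ldots,i_k)$ be the first TTC cycle at $P_N$, where $i_j$'s top is $x_{i_{j+1}}$ (indices mod $k$). Using the FPT property, pick $P'_{i_j}\equiv x_{i_{j+1}}\,x_{i_j}\cdots$ in $\mathcal{D}$ for each $j$. At the auxiliary profile $\widetilde P=(P'_{C_1},P_{-C_1})$, every $i_j\in C_1$ has IR set $\{x_{i_j},x_{i_{j+1}}\}$, and the standard cascade argument (if some $i_j\in C_1$ received its endowment in a PE+IR $g_s$, the restricted IR sets would cascade to force all of $C_1$ to receive their endowments, an allocation on $C_1$ Pareto-dominated by the cyclic swap, violating PE of $g_s$) forces $g_s|_{C_1}$ to equal the TTC outcome on $C_1$ for every $g_s$ with $\mu_s>0$. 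Hence $\rrule_{i_j}(x_{i_{j+1}},\widetilde P)=1$ for every $j$. To transfer this back to $P_N$, I would walk from $\widetilde P$ to $P_N$ by restoring the agents' preferences one at a time (say in reverse cyclic order $i_k,i_{k-1},\ldots,i_1$). At each step the top of the restored agent is unchanged, so SD-top-SP applied in both directions yields equality of its top-probability across the step. The plan is to interleave this equality with a cascade argument at every intermediate profile: the just-restored agent, whose top-probability is pinned down to $1$, serves as the anchor from which the cascade propagates through the remaining still-modified agents and wraps back around the cycle, forcing $g_s|_{C_1}$ to remain the TTC outcome on $C_1$. After all $k$ restorations we arrive at $P_N$ with $\rrule_{i_l}(x_{i_{l+1}},P_N)=1$ for every $l\in\{1,\ldots,k\}$. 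Restricting to $N\setminus C_1$ and $\objects\setminus \objects_{C_1}$, the same argument applies to the next TTC cycle; the induction terminates after finitely many rounds and delivers $\rrule(P_N)=\text{TTC}(P_N)$.

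\textbf{Main obstacle.} The delicate point is the cascade propagation at intermediate profiles, where some $C_1$ agents have already been restored and so lose the restricted IR set $\{\text{endowment},\text{top}\}$. The reverse-cyclic restoration order is designed so that the cascade, anchored by the newly-restored agent (whose top probability equals $1$ by SD-top-SP), can traverse the still-modified segment of $C_1$ and wrap back around. Making this argument work rigorously at every intermediate stage---in particular when the cascade needs to pass through a previously-restored agent whose IR set is no longer restricted---requires careful combinatorial bookkeeping, likely invoking additional SD-top-SP equalities at interleaved auxiliary profiles. Managing this chain of arguments, and carrying out the induction on subsequent TTC rounds using only SD-top-SP (rather than full SD-SP), constitutes the bulk of the technical work.
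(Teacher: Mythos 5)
Your overall strategy is the paper's: reduce ex post IR to SD-IR (Appendix~\ref{appen_C}), then rerun the proof of Theorem~\ref{thm_1} and replace its single use of SD-Pareto efficiency by an ex post argument at the auxiliary profile where each cycle agent ranks her successor's endowment first and her own second. The genuinely new ingredient is handled correctly: your observation that every $f_k$ with positive weight in an ex post PE decomposition must itself be IR (since SD-IR forces $\rrule_{i}(U(x_i,P_i),P_N)=1$) is sound, and your cascade --- if some cycle agent kept her endowment in some $f_k$, the two-element IR sets would force the entire cycle to keep their endowments, an allocation Pareto-dominated by the cyclic swap --- is a clean per-assignment variant of what the paper does (the paper instead establishes Claim~\ref{cla: ir} at the level of column sums and then notes that the only Pareto efficient deterministic assignment consistent with those sums gives each cycle agent her top, which is all that is needed to re-derive Claim~\ref{cla: pe}).

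The gap is the one you flag yourself: the transfer from $\widetilde P$ back to $P_N$. Two remarks. First, your reverse-cyclic restoration order genuinely breaks at the second step already: once $i_{k-1}$ is restored and pinned to $x_{i_k}$ by SD-top-strategy-proofness, the already-restored agent $i_k$ has lost her two-element IR set, so knowing she receives zero probability of $x_{i_k}$ only confines her to $U(x_{i_k},P_{i_k})\setminus\{x_{i_k}\}$, which need not be $\{x_{i_1}\}$; no cascade closes the cycle from there. Second, and more importantly, this machinery does not need to be rebuilt: it uses only SD-IR and SD-top-strategy-proofness, both of which you have, so it can be imported verbatim from the proof of Theorem~\ref{thm_1}. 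There the restoration is done in \emph{forward} order (the newly restored agent is pinned by SD-top-strategy-proofness, the still-modified successors are pinned by the forward cascade through their two-element IR sets), and the wrap-around to an already-restored agent $i$ is handled by exactly the device you guess at: replace $P_i$ by the auxiliary $\bar P_i\equiv x_{i+1}x_i\cdots$, note that the entire preceding argument used only the top of $P_i$ and hence still yields $\rrule_{i-1,x_i}=1$ at the auxiliary profile, conclude $\rrule_{i,x_{i+1}}=1$ there from the restored two-element IR set, and transfer back to $P_i$ by SD-top-strategy-proofness. With that imported, your proof is complete; as written, it stops short of a proof because the step you defer is load-bearing.
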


\begin{remark}\label{rem_3}
    As every SD-Pareto efficient assignment is also ex post Pareto efficient (Lemma 2-(ii) in \cite{bogomolnaia2001new}), we may see Theorem \ref{thm_1} as a corollary of Theorem \ref{thm_3}. 
\end{remark}

\subsection{Results under Ex-post pair Efficiency} 

In the spirit of ex post Pareto efficiency, we may similarly define ex post pair-efficiency. Below, we formally define it.

\begin{definition}(Ex post pair-efficiency)\label{def: epae} A probabilistic assignment \randoma is Ex post pair-efficient at a profile $P_N$ if it can be written as a convex combination of deterministic pair-efficient assignments at $P_N$.
 
A PAR $\varphi:\mathcal{D}^n \to \mathcal{A}$ is \textit{ex post pair-efficient} if for all $P_N\in \mathcal{D}^n$, $\varphi(P_N)$ is ex post pair-efficient.	
\end{definition}

{It can be observed that the probabilistic outcome in Example \ref{example_2} is actually ex post pair-efficient but not SD-pair efficient, as it can be improved for the agent pair $(1,2)$ without changing the assignments of the others. Therefore, an ex post pair efficient allocation may fail to satisfy SD-pair efficiency. However, similar to SD-Pareto efficiency, it can be shown that every SD-pair efficient allocation is ex post pair efficient. We omit the proof as it uses the same idea as in Lemma 2-(ii) in \cite{bogomolnaia2001new}.
 Nevertheless, like before, if we replace SD-pair efficiency by ex post pair efficiency, and SD-individual rationality by ex post individual rationality, the conclusion of Theorem \ref{thm_2} continues to hold.}

\begin{theorem}\label{thm_4}
    Let $\mathcal{D}$ be an FTT domain. Then, a PAR on $\mathcal{D}$ is ex post pair efficient, ex post IR, and SD-top-strategy-proof if and only if it is the TTC rule. 
\end{theorem}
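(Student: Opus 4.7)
The forward (if) direction is immediate: TTC is a deterministic rule that is classically pair efficient, individually rational, and strategy-proof; these properties trivially yield ex post pair efficiency (as a degenerate one-point convex combination of deterministic pair efficient assignments), ex post individual rationality, and SD-top-strategy-proofness.

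For the reverse direction, let $\varphi$ be a PAR on an FTT domain satisfying the three properties, with the goal of showing $\varphi(P_N) = TTC(P_N)$ for every $P_N$. A first observation is that, since the initial endowment is deterministic, ex post individual rationality coincides with SD-individual rationality (as recorded in Section \ref{sec: exp}). Hence, compared with Theorem \ref{thm_2}, the only weakening is the replacement of SD-pair efficiency by its ex post analogue, and the strategy is to show that, in the presence of SD-top-strategy-proofness on an FTT domain, this weakening is absorbed.

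My plan is a structural induction on TTC cycles. Fixing a profile $P_N$, use ex post pair efficiency to decompose $\varphi(P_N) = \sum_r \lambda_r C_r$ where each $C_r$ is a deterministic pair efficient assignment; by ex post IR each $C_r$ is also individually rational at $P_N$. Let $(i_1, \ldots, i_k)$ denote the first TTC cycle, with $P_{i_j}(1) = x_{i_{j+1}}$. I would show that every $C_r$ in the decomposition must allocate $x_{i_{j+1}}$ to $i_j$ for each $j$, thereby deterministically fixing the first-cycle allocations as in TTC. The remaining agents on $N \setminus \{i_1, \ldots, i_k\}$ with the remaining objects $X \setminus \{x_{i_1}, \ldots, x_{i_k}\}$ then face a smaller problem on a restricted domain that is still FTT (since FTT is inherited by restriction to subsets of the object set), and the inductive hypothesis completes the argument.

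The main obstacle is the base step of forcing every $C_r$ to implement the first-cycle trade. Ex post pair efficiency alone does not exclude deterministic $C_r$ in which $i_j$ retains the endowment $x_{i_j}$ while $x_{i_{j+1}}$ is held by some agent $m$ with $x_{i_{j+1}} P_m x_{i_j}$ (which is compatible with both pair efficiency and IR, since the only available intra-pair swap would strictly hurt $m$). To rule these configurations out, I would combine SD-top-strategy-proofness with the preference flexibility supplied by FTT: for each cycle agent $i_j$, construct an auxiliary preference $P_{i_j}' \in \mathcal{D}$ whose top three objects are chosen so that, at the perturbed profile $(P_{i_j}', P_{-i_j})$, an inner induction (on cycle sizes) together with the decomposition argument pins down $\varphi_{i_j, x_{i_{j+1}}}(P_{i_j}', P_{-i_j}) = 1$. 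SD-top-SP then transfers this to $\varphi_{i_j, x_{i_{j+1}}}(P_N) = 1$. The full strength of FTT, as opposed to the FPT condition used in Theorem \ref{thm_3}, is needed precisely here to provide enough freedom in the top-three positions of the auxiliary preferences to compensate for the weaker pair efficiency hypothesis.
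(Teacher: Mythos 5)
Your overall architecture matches the paper's: reduce to SD-individual rationality via the deterministic-endowment equivalence, decompose $\varphi$ into deterministic pair-efficient and individually rational assignments, use FTT to build auxiliary profiles that force the first TTC cycle, and transfer back with SD-top-strategy-proofness. However, the crux of the proof is exactly the step you describe as ``an inner induction together with the decomposition argument pins down $\varphi_{i_j,x_{i_{j+1}}}(P_{i_j}',P_{-i_j})=1$,'' and this is asserted rather than carried out. The whole difficulty of Theorems \ref{thm_2} and \ref{thm_4} lives there, and the argument depends on a very specific choice of auxiliary profile: the paper replaces the preferences of \emph{all} cycle agents simultaneously by $\bar{P}_{i_1}\equiv x_{i_2}x_{i_1}\cdots$, $\bar{P}_{i_j}\equiv x_{i_{j+1}}x_{i_1}x_{i_j}\cdots$ for intermediate $j$, and $\bar{P}_{i_k}\equiv x_{i_1}x_{i_k}\cdots$, with the endowment $x_{i_1}$ of the \emph{first} cycle agent planted in second position for every intermediate agent. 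Individual rationality then confines each cycle agent's probability to at most three objects, a counting argument shows that each deterministic component $F$ of the decomposition keeps the cycle's endowments inside the cycle with $F_{i_{j-1}x_{i_j}}+F_{i_jx_{i_j}}=1$, and if some non-terminal $i_j$ receives $x_{i_1}$ in $F$ then $i_{j+1}$ must retain $x_{i_{j+1}}$, whereupon swapping $x_{i_1}$ and $x_{i_{j+1}}$ between $i_j$ and $i_{j+1}$ strictly improves both (this is where the planted $x_{i_1}$ is used), contradicting pair efficiency of $F$. Without exhibiting such a profile and running this swap argument, your proof does not yet rule out the problematic components $C_r$ you correctly identified.

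A second, smaller gap is in the transfer back to $P_N$. You propose perturbing one cycle agent $i_j$ at a time and invoking SD-top-strategy-proofness to conclude $\varphi_{i_j,x_{i_{j+1}}}(P_N)=1$, but SD-top-SP only moves information across a \emph{unilateral} deviation by the agent whose preference changes; after reverting $i_j$ you must re-establish the allocation of the \emph{other} cycle agents at the new profile before you can revert the next one. The paper does this with a sequential induction (reverting agents $i_1,\ldots,i_t$ one by one, re-applying the lemma to the still-perturbed tail of the cycle at each stage, plus an extra auxiliary preference $\hat{P}_{i_1}$ for the already-reverted agents). Your one-shot-per-agent transfer, as stated, does not give you the hypothesis needed to apply the decomposition argument at each intermediate profile.
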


\begin{remark}
    {In the same spirit of Remark \ref{rem_3}, we may see Theorem \ref{thm_2} as a corollary of Theorem \ref{thm_4}.}
\end{remark}

\section{Future Directions}\label{sec: fd}
Throughout the paper, we have considered deterministic endowments. A natural question arises: what happens when the initial endowment is probabilistic? It is shown in the proof of Theorem~3 of \cite{athanassoglou2011house} that when there are four agents and the initial endowment is uniform (that is, each element of the endowment, viewed as an assignment represented by a bi-stochastic matrix, is equal to \(\tfrac{1}{4} \)), there exists no probabilistic rule satisfying SD-Pareto efficiency, SD-individual rationality, and SD-strategy-proofness. This raises the question of whether this impossibility extends to arbitrary strictly probabilistic endowments. If not, what are the necessary and sufficient structures of the initial probabilistic endowments that allow for such a rule, or whether relaxing Pareto efficiency to the weaker notion of pair efficiency leads to a possibility? We leave these questions for future research.

	\bibliographystyle{plainnat}
	\setcitestyle{numbers}
	\bibliography{references.bib}

\appendix
\section{Proof of Theorem \ref{thm_1}}
\begin{proof}
The if part of the result follows from \cite{ma1994strategy} as the TTC rule satisfies the three properties in the deterministic setup. We proceed to show the only-if part by induction on $n$. As the base case, first assume that $n=2$. If the top-ranked objects of each of the two agents are different, both of them get their top-ranked objects, and the outcome is that of the TTC rule. If the top-ranked objects of both the agents are the same, i.e., $\dendow{1}$, by SD-individual rationality, agent $1$ is assigned the whole of $\dendow{1}$ and hence agent 2 is assigned the whole of $\dendow{2}$. This again is the outcome of the TTC rule.

\textit{Induction Hypothesis (IH)}: When there are at most $n-1$ agents and objects, any assignment rule $\rrule$ satisfying SD-Pareto-efficiency, SD-individual rationality, and SD-top-strategy-proofness is the TTC rule.

We use IH to prove that the same statement holds for $n$ agents and objects. Let $P_N$ be the preference profile of the agents and construct $\mathcal{G}(P_N)$ such that an edge $(i,j)$ is present if and only if $\prefof{}{}{} = \dendow{j}$. WLG let $\mathcal{C}=(1,\ldots,k,1)$ be a cycle in $\mathcal{G}(P_N)$ (we use ${k+1}$ to denote $1$ and $0$ to denote $k$). First assume that  $\mathcal{C}$ is a singleton cycle, i.e., $k=1$ and hence, $P_{1}(1)=x_{1}$. Consequently, by IR,  $\rruleof{1}{\dendow{1}}{P_N} = 1$. Now, there are $n-1$ remaining agents and objects, and these remaining objects are the initial endowments of these agents. Therefore, by IH, the objects will be allocated according to the TTC rule, implying $\varphi(P_N)$ is the TTC outcome and completing the proof.

Now assume that $\mathcal{C}$ is a non-singleton cycle, i.e., $k\geq 2$. Consider a new profile $P'_N$  by modifying the preferences of agents as follows:
$$
P_i'\equiv\begin{cases}
			x_{i+1}x_i\cdots, & \text{if $i\in C$}\\
            P_i, & \text{otherwise.}
		 \end{cases}
$$

Two observations to be noted here. First, the preference profile $P_N'$ is feasible as $\mathcal{D}$ is an FPT domain, second, as $\mathcal{G}(P_N) = \mathcal{G}(P'_N)$, hence $\mathcal{C}$ remains a non-singleton cycle at $\mathcal{G}(P'_N)$ as well. We first prove the following claim, which states that at $P'_N$, all the objects owned by the agents in $\mathcal{C}$ must be assigned only to the agents in $\mathcal{C}$ with probability 1.
\begin{claim}\label{cla: ir}
For all $j \in \{1,\ldots,k\}$, we have,
$$\rruleof{{j-1}}{\dendow{j}}{P'_N}+\rruleof{j}{\dendow{j}}{P'_N} = 1.$$
\end{claim}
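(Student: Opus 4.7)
The plan is to combine SD-individual rationality at the profile $P'_N$ with the row- and column-sum constraints of the bi-stochastic matrix $\rrule(P'_N)$. The key structural observation is that in $P'_N$, for every agent $j$ in the cycle $\mathcal{C}$, her endowment $x_j$ is ranked second and $x_{j+1}$ is ranked first, so SD-IR pins down the support of her assigned row to just two objects.

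First, I would invoke SD-individual rationality for each $j \in \{1,\ldots,k\}$ at $P'_N$. Evaluating the stochastic-dominance inequality on the upper contour set $U(x_j, P'_j) = \{x_{j+1}, x_j\}$ yields
$$\rrule_j(x_{j+1}, P'_N) + \rrule_j(x_j, P'_N) \;\geq\; E_j\bigl(\{x_{j+1}, x_j\}\bigr) = 1,$$
and the row-sum constraint forces equality. Hence the entire probability mass assigned to agent $j$ at $P'_N$ sits on $\{x_j, x_{j+1}\}$.

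Next, I would sum these $k$ equalities and re-index. Summing gives
$$\sum_{j=1}^{k} \rrule_j(x_{j+1}, P'_N) + \sum_{j=1}^{k} \rrule_j(x_j, P'_N) = k.$$
Using the cyclic convention $k+1 = 1$, I can rewrite the first sum as $\sum_{j=1}^{k} \rrule_{j-1}(x_j, P'_N)$, obtaining
$$\sum_{j=1}^{k} \bigl[\rrule_{j-1}(x_j, P'_N) + \rrule_j(x_j, P'_N)\bigr] = k.$$
Finally, I would invoke the column-sum constraint: for every $j$, $\rrule_{j-1}(x_j, P'_N) + \rrule_j(x_j, P'_N) \leq \sum_{i \in N} \rrule_i(x_j, P'_N) = 1$. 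Since $k$ non-negative terms each bounded by $1$ add to $k$, each must equal $1$, which is exactly the claim.

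The argument is essentially a counting exercise and I do not anticipate a real obstacle; the main care needed is with the cyclic indexing (interpreting $x_{k+1}$ as $x_1$ and agent $0$ as agent $k$) and recognising that it is SD-IR \emph{at the modified profile} $P'_N$---where the endowment is second-ranked---that is doing the heavy lifting. Notably, neither SD-Pareto efficiency nor SD-top-strategy-proofness is needed for this particular claim; they will enter later when we use this claim to identify each individual $\rrule_j(x_{j+1}, P'_N)$ and then pull the conclusion back to the original profile $P_N$.
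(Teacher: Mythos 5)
Your proposal is correct and follows essentially the same route as the paper: SD-individual rationality at $P'_N$ confines each cycle agent's row to $\{x_j,x_{j+1}\}$, summing over the cycle and re-indexing gives a total of $k$, and the column-sum bound of $1$ per term forces each term to equal $1$. The only difference is that you spell out why SD-IR plus the row-sum constraint yields equality on the two-element upper contour set, which the paper states without elaboration.
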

\begin{proof}
As $\rrule$ is SD-individually rational, for every $i \in \mathcal{C}$ we have, $\rruleof{i}{\{x_{i+1},\dendow{i}\}}{P'_N} = 1$. This, together with  $|\mathcal{C}|=k$, yields 
\begin{align}
    \label{eq: equalsz}
    \suml{i \in \mathcal{C}}{(\rrule_{ix_{i+1}}(P_N')+\rrule_{i\dendow{i}}(P_N'))}=k
\end{align}
Further, as $\mathcal{C}$ is a cycle, (\ref{eq: equalsz}) can be rewritten as
\begin{align}
    \label{eq: equalsz2}
    \suml{\{\dendow{j}\mid \; j \in \mathcal{C}\}} {(\rruleof{{j-1}}{\dendow{j}}{P'_N}+\rruleof{j}{\dendow{j}}{P'_N})} = k.
\end{align}
Since $\rruleof{\boldsymbol{\bigcdot}}{\dendow{j}}{P'_N}$ is a probability distribution, each term on the left side of (\ref{eq: equalsz2}) must be at most $1$. That is, for any $\dendow{j}$ such that $i_j \in \mathcal{C}$, we have
$\rruleof{{j-1}}{\dendow{j}}{P'_N}+\rruleof{j}{\dendow{j}}{P'_N} \leq 1$. This, when combined with (\ref{eq: equalsz2}) and the fact that $|\mathcal{C}| = k$, gives
$$\rruleof{{j-1}}{\dendow{j}}{P'_N}+\rruleof{j}{\dendow{j}}{P'_N} = 1.$$
Hence, the claim follows.
\end{proof}
We now use SD-Pareto efficiency of $\rrule$. From Claim (\ref{cla: ir}), endowments of all the agents in $\mathcal{C}$ are assigned to the agents in $\mathcal{C}$ with probability 1. We claim that $\varphi(P'_N)$ assigns $x_{i+1}$ to each agent $i \in \mathcal{C}$. To see this, consider any other assignment $A$ satisfying Claim \ref{cla: ir}. Clearly, $A$ is Pareto-dominated by an assignment $A'$ where 
$$
\randoma'_{i\boldsymbol{\bigcdot}}=\begin{cases}
			x_{i+1}, & \text{if $i\in C$}\\
            \randoma_{i\boldsymbol{\bigcdot}}, & \text{otherwise.}
		 \end{cases}
$$
Thus, by the SD-Pareto efficiency of $\varphi$,  $\varphi(P'_N)$ assigns $\prefof{}{}{}$ to each agent $i \in \mathcal{C}$. Also, this is the TTC outcome of the agents in $\mathcal{C}$. Formally, we have the following claim.    
\begin{claim}\label{cla: pe}
$\varphi(P_N')$ is the TTC assignment for all the agents in the cycle $\mathcal{C}$.
\end{claim}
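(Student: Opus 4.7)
The plan is to combine Claim~\ref{cla: ir} with SD-individual rationality to reduce the assignment of each agent in $\mathcal{C}$ to a single scalar parameter, then use SD-Pareto efficiency to pin that parameter down to $1$.

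First, fix any $i\in\mathcal{C}$. Since $P_i'\equiv x_{i+1}x_i\cdots$, SD-individual rationality at $P_N'$ forces the whole probability mass of agent $i$ to sit on the top two objects of $P_i'$, i.e. $\rruleof{i}{\dendow{i+1}}{P'_N}+\rruleof{i}{\dendow{i}}{P'_N}=1$. Writing $p_i:=\rruleof{i}{\dendow{i+1}}{P'_N}$, this gives $\rruleof{i}{\dendow{i}}{P'_N}=1-p_i$. Feeding this into Claim~\ref{cla: ir} applied at index $j=i+1$ yields
\[
p_i+(1-p_{i+1})=\rruleof{i}{\dendow{i+1}}{P'_N}+\rruleof{i+1}{\dendow{i+1}}{P'_N}=1,
\]
so $p_i=p_{i+1}$. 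Propagating around the cycle $\mathcal{C}=(1,\ldots,k,1)$ produces a common value $p_1=\cdots=p_k=:p$, and the row-stochastic structure within $\mathcal{C}$ is completely described by $p$.

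Next, I would rule out $p<1$ by constructing an explicit SD-Pareto improvement. Define a matrix $A'$ agreeing with $\rrule(P'_N)$ outside the rows indexed by $\mathcal{C}$, and, for each $i\in\mathcal{C}$, setting $A'_{i\dendow{i+1}}=1$ and $A'_{i\dendow{i}}=0$. Row sums for agents in $\mathcal{C}$ stay at $1$. For each column indexed by $\dendow{j}$ with $j\in\mathcal{C}$, Claim~\ref{cla: ir} had already localized the entire mass onto the pair $\{j-1,j\}$, and $A'$ simply redirects that mass fully to agent $j-1$, preserving the column sum of $1$; columns indexed outside $X_{\mathcal{C}}$ are untouched. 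Hence $A'$ is bi-stochastic. When $p<1$, each agent $i\in\mathcal{C}$ strictly SD-prefers $A'_{i\boldsymbol{\bigcdot}}$ to $\rrule_{i\boldsymbol{\bigcdot}}(P'_N)$ because the probability on their top object $\dendow{i+1}$ jumps from $p$ to $1$, while agents outside $\mathcal{C}$ are indifferent by construction. This contradicts SD-Pareto efficiency of $\rrule(P'_N)$, forcing $p=1$.

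The conclusion $p=1$ is exactly the statement of the claim: at $P'_N$, each agent $i\in\mathcal{C}$ receives $\dendow{i+1}$ with certainty, which is precisely what the TTC rule assigns on the cycle $\mathcal{C}$ (since $\mathcal{C}$ is a cycle in the top-choice graph at $P'_N$). There is no serious obstacle in this plan; the only delicate bookkeeping lies in verifying that $A'$ is bi-stochastic, and Claim~\ref{cla: ir} is the tool that makes this verification immediate by isolating the mass of each $\dendow{j}$, $j\in\mathcal{C}$, onto the pair $\{j-1,j\}$.
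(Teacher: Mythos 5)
Your proposal is correct and follows essentially the same route as the paper: use Claim~\ref{cla: ir} together with SD-individual rationality to confine the mass of each cycle agent, then exhibit the assignment giving every agent in $\mathcal{C}$ her top object as an SD-Pareto improvement. Your version is merely more explicit than the paper's (the common parameter $p$ and the bi-stochasticity check of $A'$ are left implicit there), but the underlying argument is identical.
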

We now proceed to show that the same as in Claim \ref{cla: pe} holds for the profile $P_N$ as well. We  sequentially change the preferences of the agents in $\mathcal{C}$ from $P_i'$ to $P_i$ for $i\in \{1,\ldots,k\}$. First, consider the case when only the preference of agent 1 is changed to $P_1$, i.e., the underlying profile is  $P^1_N$ where $P_1^1=P_1$ and $P_i^1=P_i'$ for all $i\in \{2,\ldots,k\}$. As $\rruleof{1}{x_{1}}{P'_N} = 1$,  by SD-top-strategy-proofness, we have $\rruleof{1}{x_{1}}{P^1_N} = 1$. Now for agent $2$, as $P_2'\equiv x_3\dendow{2}\cdots$, by SD-individual rationality, $\rruleof{2}{\{x_3,\dendow{2}\}}{P^1_N} = 1$. This, together with $\rruleof{1}{x_{1}}{P^1_N} = 1$, implies $\rruleof{2}{x_3}{P^1_N} = 1$. We can continue this way and show that   $\rruleof{i}{x_{i+1}}{P^1_N} = 1$ for all $i\in \{3,\ldots,k\}$. Thus, the outcome at $P_N^1$ is the TTC outcome for the agents in $\mathcal{C}$.

We now use an induction on the number of agents whose preferences are changed to the preferences in $P_N$. The base case is already shown in the previous paragraph.  Suppose the outcome is the TTC outcome for the agents in $\mathcal{C}$ after changing the preferences of agents $1,\ldots,l$, i.e., for the profile $P^l_N$ where 
$$
P^l_i=\begin{cases}
			P'_i, & \text{if $i\in \{l+1,\ldots,k\}$}\\
            P_i, & \text{otherwise.}
		 \end{cases}
$$
We show that the same holds for $P_N^{l+1}$. By SD-top-strategy-proofness, $\rruleof{l+1}{x_{l+2}}{P^{l+1}_N} = 1$. For agent $l+2$, as $P_{l+2}'\equiv x_{l+3}\dendow{l+2}\cdots$, by SD-individual rationality, $\rruleof{l+2}{\{x_{l+3},\dendow{l+2}\}}{P^{l+1}_N} = 1$. As $\rruleof{l+1}{x_{l+2}}{P^{l+1}_N} = 1$, this means $\rruleof{l+2}{x_{l+3}}{P^{l+2}_N} = 1$. Continuing in this way we have $\rruleof{i}{x_{i+1}}{P^{l+1}_N} = 1$ for all $i\in \{l+3,\ldots,k\}$. We are now left with showing that $\rruleof{i}{x_{i+1}}{P^{l+1}_N} = 1$ for all $i\in \{1,\ldots,l\}$. We show it for agent 1, and the similar arguments hold for other agents too. Consider the preference $\bar{P}_1\equiv x_2x_{1}\cdots$. Note that in the proof so far, we have not used any restriction on $P_1$ other than the top-ranked alternative of $P_1$ is $x_2$. Thus, the same arguments can be carried out to show that at $(\bar{P}_1,P^{l+1}_{-1})$, $\rruleof{i}{x_{i+1}}{\bar{P}_1,P^{l+1}_{-1}}=1$ for all $i\in \{l+1,\ldots,k\}$. This, in particular, $\rruleof{k}{x_1}{\bar{P}_1,P^{l+1}_{-1}}=1$,  together with SD-individual rationality for agent 1, yields $\rruleof{1}{x_2}{\bar{P}_1,P^{l+1}_{-1}}=1$. Now, by SD-top-strategy-proofness, $\rruleof{1}{x_2}{P^{l+1}_N}=1$. This completes the induction step, and hence, it is proved that the outcome at $P_N$ coincides with the TTC outcome. The remaining proof follows by IH as there are fewer than $n-1$ agents and objects; these remaining objects are the initial endowments of these agents.
\end{proof}

\section{Proof of Theorem \ref{thm_2}}

\begin{proof}
    From \cite{ekici2024pair}'s result, it follows that the TTC rule is SD-pair-efficient, SD-individually rational and SD-top-strategy-proof. Now we prove the only-if direction using induction on $n$. First, assume that $n = 2$. As for $n=2$, SD-Pareto efficiency and SD-pair efficiency are identical; it follows from Theorem \ref{thm_1} that any PAR satisfying SD-pair efficiency, SD-individual rationality, and SD-top-strategy-proofness must be the TTC rule. We now consider the following Induction Hypothesis.
    

   \noindent \textit{Induction Hypothesis (IH)}: When there are at most $n-1$ agents and objects, every PAR $\rrule$ satisfying SD-pair-efficiency, SD-individual rationality, and SD-top-strategy-proofness is the TTC rule.

    We use IH to prove that the same statement holds for $n$ agents and objects. We start with a lemma that will be used quite a few times in the proof. Recall that our assumption the initial endowment for agent $i$ is $x_i$ for all $i\in \{1,\ldots,n\}$. Consider $k$ such that $1\leq k\leq n$, and consider the following preferences of the agents $1,\ldots,k$.
   \begin{equation}\label{eq_0}
\bar{P}_i\equiv\begin{cases}
			x_2x_1\cdots, & \text{if $i = 1$}\\
            x_{i+1}x_1x_i\cdots, & \text{if $i\in \{2,\ldots,k-1\}$}\\
            x_1x_k\cdots, & \text{if $i = k$.}
		 \end{cases}
  \end{equation} 

Note that such preferences are feasible as $\mathcal{D}$ is an FTT domain. Let $l\in \{1,\ldots,k\}$ and we denote the set of agents $\{l,\ldots,k\}$ by $L$. Further, let $\bar{P}_L=(\bar{P}_l,\ldots,\bar{P}_k)$.  We now formally state the lemma. In the statement and in the proof of the lemma, for notational convenience, we sometimes use $k+1$ to denote $1$ and $0$ to denote $k$.
\begin{lemma}\label{lemma_1}
For all $P_{-L}\in \mathcal{D}^{n-k+l-1}$, the following implication holds
$$\big[\text{either } l=1 \text{ or } \varphi_{l\dendow{l}}(\bar{P}_L,P_{-L})=0\big] \implies \big[\varphi_{i\dendow{i+1}}(\bar{P}_L,P_{-L})=1 \text{ for all }i\in L\big].$$
\end{lemma}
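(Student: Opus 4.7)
The plan is to prove Lemma~\ref{lemma_1} by induction on the size $|L| = k - l + 1$. For the base case $|L| = 1$ (so $l = k$, with $k \ge 2$), the hypothesis reduces to $\varphi_{k,x_k}(\bar{P}_L, P_{-L}) = 0$. Combined with SD-individual rationality applied at $\bar{P}_k \equiv x_1 x_k \cdots$, which forces $\varphi_k$ to place total mass $1$ on $\{x_1, x_k\}$, we obtain $\varphi_{k,x_1}(\bar{P}_L, P_{-L}) = 1$, establishing the conclusion since $x_{k+1}$ denotes $x_1$.

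For the inductive step with $|L| \ge 2$, the key move is to apply SD-pair efficiency to the pair $(l, l+1)$. Both $\bar{P}_l$ and $\bar{P}_{l+1}$ (including their boundary variants when $l = 1$ or $l+1 = k$) have $x_1$ and $x_{l+1}$ among their top three ranks, with $x_{l+1}$ being agent $l$'s top and agent $l+1$'s endowment, while $x_1$ is agent $l+1$'s second choice. Swapping $\epsilon$ mass of agent $l$'s $x_1$ with $\epsilon$ mass of agent $l+1$'s $x_{l+1}$ leaves the combined allocation $\varphi_l + \varphi_{l+1}$ unchanged, keeps all other rows fixed, and strictly SD-improves both agents (agent $l$ shifts mass from her second rank to her top, and agent $l+1$ shifts mass from her third rank to her second). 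Hence SD-pair efficiency forces either $\varphi_{l, x_1}(\bar{P}_L, P_{-L}) = 0$ or $\varphi_{l+1, x_{l+1}}(\bar{P}_L, P_{-L}) = 0$. In the first branch, combining $\varphi_{l, x_1} = 0$ with the hypothesis $\varphi_{l, x_l} = 0$ (or with SD-individual rationality alone when $l = 1$) gives $\varphi_{l, x_{l+1}} = 1$; the column-sum constraint for $x_{l+1}$ then forces $\varphi_{l+1, x_{l+1}} = 0$, and the inductive hypothesis applied to $L' = L \setminus \{l\}$ delivers the remaining conclusions. In the second branch, the inductive hypothesis applies directly to $L'$, yielding $\varphi_{i, x_{i+1}} = 1$ for every $i \in L'$ (in particular $\varphi_{k, x_1} = 1$); the column-sum constraint for $x_1$---with the middle-agent contributions $\varphi_{j, x_1}$ for $j \in \{l+1, \ldots, k-1\}$ already forced to $0$ by the inductive hypothesis---then forces $\varphi_{l, x_1} = 0$, whence SD-individual rationality yields $\varphi_{l, x_{l+1}} = 1$.

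The main obstacle is identifying the right pair-efficiency swap, and this is where the careful construction of the preferences $\bar{P}_i$ pays off: placing $x_1$ as the second-ranked object for every middle agent makes $x_1$ act as a common ``currency'' for bilateral trade that strictly improves both swap partners, yielding the clean dichotomy used above. Without this structural feature, one would need a multi-agent cycle trade of the kind available under SD-Pareto efficiency but unavailable under mere SD-pair efficiency, which is precisely why the enclosing theorem requires the stronger FTT domain. Notice that SD-top-strategy-proofness plays no direct role in the proof of the lemma itself, though it is essential for the broader arguments in the proof of Theorem~\ref{thm_2} that invoke this lemma to chain together conclusions across several preference profiles.
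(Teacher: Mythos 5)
Your proof is correct, and it reaches the conclusion by a genuinely different decomposition than the paper's. The core mechanism is the same in both arguments—the $\epsilon$-swap of $x_1$ against $x_{i+1}$ between the adjacent agents $i$ and $i+1$, which is exactly why the preferences $\bar{P}_i$ place $x_1$ in second position for the middle agents—but the surrounding structure differs. The paper first proves a global closure/counting statement (Claim \ref{cl_1}): summing the individual-rationality constraints over all of $L$ yields $k-l+1$, and since each column contributes at most $1$, every bracketed term must equal exactly $1$; it then derives $\varphi_{kx_1}=1$ by locating an arbitrary agent $i<k$ with $\varphi_{ix_1}>0$ and exhibiting the explicit pair-dominating bi-stochastic matrix for $(i,i+1)$ (Claim \ref{cl_2}), and finally chains forward from $l$ using individual rationality. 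You instead induct on $|L|$, apply the swap only at the head pair $(l,l+1)$ to obtain the dichotomy $\varphi_{lx_1}=0$ or $\varphi_{l+1,x_{l+1}}=0$, and in either branch pass to $L'=\{l+1,\ldots,k\}$ after a one-line column-sum deduction (for column $x_{l+1}$ in the first branch, for column $x_1$ in the second). This buys you a shorter argument that dispenses with the counting claim entirely, replacing it with elementary bistochasticity at each step; the paper's version, in exchange, isolates Claim \ref{cl_1} and Claim \ref{cl_2} as reusable statements that the ex-post variant (Theorem \ref{thm_4}) later re-proves under ex post pair efficiency, so the modular structure there is not merely stylistic. Your side observations—that strict SD-improvement of both swap partners holds including the boundary cases $l=1$ and $l+1=k$, and that SD-top-strategy-proofness is not used inside the lemma—are both accurate.
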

\begin{proof}
    Fix $P_{-L}\in \mathcal{D}^{n-k+l-1}$. We complete the proof in two steps. In the first step, we prove the following claim. 
    \begin{claim}\label{cl_1} The following two statements hold
    \begin{enumerate}[(i)]
        \item $\suml{i \in \{l,\ldots,k\}} {\rruleof{i}{\dendow{1}}{\bar{P}_L,P_{-L}}}=1$,  and 
        \item for all $j\in \{l+1,\ldots,k\}$, 
 $\big[\rruleof{{j-1}}{\dendow{j}}{\bar{P}_L,P_{-L}}+\rruleof{j}{\dendow{j}}{\bar{P}_L,P_{-L}}\big]= 1.$
    \end{enumerate}
    \end{claim}
   \noindent \textbf{Proof of the claim:} As $\rrule$ is SD-individually rational, we have the following
   \begin{enumerate} [(a)]
       \item $\rruleof{k}{\{x_1,x_k\}}{\bar{P}_L,P_{-L}}= 1$, and
       \item for every $i \in \{l+1,\ldots,k-1\}$, $\rruleof{i}{\{x_{i+1},x_1,\dendow{i}\}}{\bar{P}_L,P_{-L}} = 1$.
   \end{enumerate}
   Further, as either $l=1$ or $\varphi_{l\dendow{l}}(\bar{P}_L,P_{-L})=0$, by SD-individual rationality of $\varphi$ for $l$, we have $\rruleof{l}{\{x_{l+1},x_1\}}{\bar{P}_L,P_{-L}}= 1$. Thus, we have 
\begin{align}\label{eq_0}
   \rruleof{l}{\{x_{l+1},x_1\}}{\bar{P}_L,P_{-L}}+\rruleof{k}{\{x_1,x_k\}}{\bar{P}_L,P_{-L}} + \suml{i \in \{2,\ldots,k-1\}}\rrule_{i\{x_{i+1},x_1,x_i\}}(\bar{P}_L,P_{-L})=k-l+1.
\end{align}
Rearranging the terms in (\ref{eq_0}), we may write
\begin{align*}
\suml{i \in \{l,\ldots,k\}} {\rruleof{i}{\dendow{1}}{\bar{P}_L,P_{-L}}}
+\suml{j\in \{l+1,\ldots,k\}} {\big(\rruleof{{j-1}}{\dendow{j}}{\bar{P}_L,P_{-L}}+\rruleof{j}{\dendow{j}}{\bar{P}_L,P_{-L}}\big)} = k-l+1.
\end{align*}
Since $\rruleof{\boldsymbol{\bigcdot}}{\dendow{i}}{P'_N}$ is a probability distribution for all $i\in \{l+1,\ldots,k\}\cup \{1\}$, we have
\begin{align*}
  \underbrace{\suml{i \in \{l,\ldots,k\}} {\rruleof{i}{\dendow{1}}{\bar{P}_L,P_{-L}}}}_{\leq 1}
+\underbrace{\suml{j\in \{l+1,\ldots,k\}} {\big[\underbrace{\rruleof{{j-1}}{\dendow{j}}{\bar{P}_L,P_{-L}}+\rruleof{j}{\dendow{j}}{\bar{P}_L,P_{-L}}}_{\leq 1}\big]}}_{\leq k-l-2}=k-l-1.
\end{align*}
Hence, we have 
\begin{align*}
&\suml{i \in \{l,\ldots,k\}} {\rruleof{i}{\dendow{1}}{\bar{P}_L,P_{-L}}}=1, \text{ and } 
 \big[\rruleof{{j-1}}{\dendow{j}}{\bar{P}_L,P_{-L}}+\rruleof{j}{\dendow{j}}{\bar{P}_L,P_{-L}}\big]=1 \text{ for all } j\in \{l+1,\ldots,k\}.
\end{align*}
This completes the proof of the claim. \hfill $\square$

We now proceed to complete the proof of the lemma. In order to do so, we prove another claim. 
\begin{claim}\label{cl_2}
    $\rruleof{{k}}{\dendow{1}}{\bar{P}_L,P_{-L}}=1$.
\end{claim}
\noindent \textbf{Proof of the claim:} Note that if $l=k$ then by Claim \ref{cl_1}-(i), there is nothing to prove. So assume that $l\leq k-1$ and
assume for contradiction that $\rruleof{{k}}{\dendow{1}}{\bar{P}_L,P_{-L}}<1$, which in view of Claim \ref{cl_1} implies that for some $i\in \{l,\ldots,k-1\}$, $\rruleof{{i}}{\dendow{1}}{\bar{P}_L,P_{-L}}>0$. We contradict this by constructing a probabilistic assignment $A$ (a bi-stochastic matrix) that SD-pair-dominates $\varphi(\bar{P}_L,P_{-L})$ for the agents $(i,i+1)$. Construct $A$ as follows:
\begin{enumerate}[(i)]
    \item $\randoma_{ix_{i+1}}=\rruleof{{i}}{\dendow{i+1}}{\bar{P}_L,P_{-L}}+\rruleof{{i}}{\dendow{1}}{\bar{P}_L,P_{-L}}$, $\randoma_{ix_{1}}=0$,  and  $\randoma_{ix_{i}}=\rruleof{{i}}{\dendow{i}}{\bar{P}_L,P_{-L}}$,
    \item  $\randoma_{i+1x_{i+2}}=\rruleof{{i+1}}{\dendow{i+2}}{\bar{P}_L,P_{-L}}$, $\randoma_{i+1x_{1}}=\rruleof{{i+1}}{\dendow{1}}{\bar{P}_L,P_{-L}}+\rruleof{{i}}{\dendow{1}}{\bar{P}_L,P_{-L}}$, and $\randoma_{i+1x_{i+1}}=\rruleof{{i+1}}{\dendow{i+1}}{\bar{P}_L,P_{-L}}-\rruleof{{i}}{\dendow{1}}{\bar{P}_L,P_{-L}}$, and
    \item all the rows of $A$, except rows $i$ and $i+1$, are the same as $\varphi(P_N')$, i.e., 
$\randoma_{r\boldsymbol{\bigcdot}}=\rruleof{r}{}{\bar{P}_L,P_{-L}}$ for all $r\in N\setminus \{i,i+1\}$. 
\end{enumerate}
We first verify that $A$ is a valid bi-stochastic matrix, implying (i) $\randoma_{rx_s}\geq 0$ for all $r,s\in \{1,\ldots,n\}$ and (ii) $\suml{r}\randoma_{rx_s}=\suml{s}\randoma_{rx_s}=1$ for all $r,s\in \{1,\ldots,n\}$. Now, (ii) follows as $\suml{r}A_{rx_s}=\suml{r}\rruleof{{r}}{\dendow{s}}{\bar{P}_L,P_{-L}}$ and $\suml{s}A_{rx_s}=\suml{r}\rruleof{{r}}{\dendow{s}}{\bar{P}_L,P_{-L}}$ for all $r,s\in \{1,\ldots,n\}$. To see (i), we only need to show $\randoma_{i+1x_{i+1}}\geq 0$ as all the other terms are either the same as that of $\varphi(\bar{P}_L,P_{-L})$ or sum of two elements of $\varphi(\bar{P}_L,P_{-L})$. Note that 
\begin{align*}
    \randoma_{i+1x_{i+1}}&=\rruleof{{i+1}}{\dendow{i+1}}{\bar{P}_L,P_{-L}}-\rruleof{{i}}{\dendow{1}}{\bar{P}_L,P_{-L}} \\
    &\geq  \rruleof{{i+1}}{\dendow{i+1}}{\bar{P}_L,P_{-L}}-\rruleof{{i}}{\dendow{1}}{\bar{P}_L,P_{-L}}-\rruleof{{i}}{\dendow{i}}{\bar{P}_L,P_{-L}}\\
    &=\rruleof{{i+1}}{\dendow{i+1}}{\bar{P}_L,P_{-L}}-(1-\rruleof{{i}}{\dendow{i+1}}{\bar{P}_L,P_{-L}})\\
    & =0.
\end{align*}
Furthermore, as $\rruleof{{i}}{\dendow{1}}{\bar{P}_L,P_{-L}}>0$, we have $\randoma_{ix_{i+1}}>\rruleof{{i}}{\dendow{i+1}}{\bar{P}_L,P_{-L}}$, implying $A\neq \varphi(\bar{P}_L,P_{-L})$. 

Observe that as $\bar{P}_{i}\equiv x_{i+1}x_1x_i\cdots$ and $\bar{P}_{i+1}\equiv x_{i+2}x_1x_{i+1}\cdots$, $A$ SD-pair-dominates $\varphi(\bar{P}_L,P_{-L})$ for agents pair $(i,i+1)$, a contradiction. Therefore, $\rruleof{{k}}{\dendow{1}}{\bar{P}_L,P_{-L}}=1$, and the proof of Claim \ref{cl_2} is complete. \hfill $\square$

What remains to complete the proof of Lemma \ref{lemma_1} is to show that $\rruleof{{j}}{\dendow{j+1}}{\bar{P}_L,P_{-L}}=1$ for all $l\leq j\leq k-1$. Again, if $l=k$, there is nothing to prove. So, we assume $l\leq k-1$. Note that by SD-individual rationality of $\varphi$ for agent $l$, we have $\varphi_{1\{\dendow{2},\dendow{1}\}}(\bar{P}_L,P_{-L})=1$ if $l=1$ and $\varphi_{l\{\dendow{l+1},\dendow{1},\dendow{l}\}}(\bar{P}_L,P_{-L})=1$ if $l>1$. This, together with Claim \ref{cl_2} and the assumption that either $l=1$ or $\varphi_{l\dendow{l}}(\bar{P}_L,P_{-L})=0$, implies $\varphi_{l\dendow{l+1}}(\bar{P}_L,P_{-L})=1$. If $l+1=k-1$, there is nothing to prove further. If $l+1<k-1$, then by SD-individual rationality of $\varphi$ for $l+1$, we have $\varphi_{l+1\{\dendow{l+2},\dendow{1},\dendow{l+1}\}}(\bar{P}_L,P_{-L})=1$. As we have already shown $\varphi_{l\dendow{l+1}}(\bar{P}_L,P_{-L})=1$ and by Claim \ref{cl_2}, $\varphi_{l+1\dendow{1}}(\bar{P}_L,P_{-L})=0$, it follows that $\varphi_{l+1\dendow{l+2}}(\bar{P}_L,P_{-L})=1$. Continuing in this way, we can show that $\rruleof{{j}}{\dendow{j+1}}{\bar{P}_L,P_{-L}}=1$ for all $j\in \{l,\ldots,k-1\}$. This completes the proof of the lemma. \end{proof}



   We are now ready to prove the theorem. Let $P_N\in \mathcal{D}^n$ be a preference profile of the agents, and we construct $\mathcal{G}(P_N)$ as follows: an edge $(i,j)$ is present if and only if $\prefof{}{}{} = \dendow{j}$. WLG let $\mathcal{C}=(1,\ldots,p,1)$ be a cycle in $\mathcal{G}(P_N)$ (we use ${p+1}$ to denote $1$ and $0$ to denote $k$), i.e, $\prefof{}{}{} = \dendow{i+1}$ for all $i\in \{1,\ldots,p\}$. First assume that  $\mathcal{C}$ is a singleton cycle, i.e., $p=1$ and hence, $P_{1}(1)=x_{1}$. Consequently, by IR,  $\rruleof{1}{\dendow{1}}{P_N} = 1$. Now, there are $n-1$ remaining agents and objects, and these remaining objects are the initial endowments of these agents. Therefore, by IH, the objects will be allocated according to the TTC rule, implying $\varphi(P_N)$ is the TTC outcome and completing the proof.

    Now assume that $\mathcal{C}$ is a non-singleton cycle, i.e., $k\geq 2$. Consider the preference profile $\tilde{P}_N$:
    $$
\tilde{P}_i\equiv\begin{cases}
			\bar{P}_i, & \text{if $i \in \{1,\ldots,p\}$}\\
            P_i, & \text{otherwise,}
		 \end{cases}
$$
where $\bar{P}_i$ is defined in (\ref{eq_0}) with $k=p$. In view of Lemma \ref{lemma_1} with $l=1$ and $k=p$, we can claim that $\varphi_{i\dendow{i+1}}(\bar{P}_L,P_{-L})=1$  for all $i\in \{1,\ldots,k\}$. Thus, the outcome of $\varphi$ at 
$\tilde{P}_N$ for the agents in the cycle $\mathcal{C}$ is the TTC outcome. We claim that for the rest of the agents, also, the outcome is the TTC outcome. This follows from IH as there are fewer than $n-1$ agents and objects remaining, with these remaining objects being the initial endowments of these remaining agents.


We now proceed to show that the outcome at $P_N$ is also the TTC outcome. Here we invoke SD-top-strategy-proofness. We  sequentially change the preferences of the agents in $\mathcal{C}$ from $\tilde{P}_i$ to $P_i$ for $i\in \{1,\ldots,k\}$. First, consider the case when only the preference of agent 1 is changed to $P_1$, i.e., the underlying profile is $P^1_N$
$$
P^1_i\equiv\begin{cases}
			\tilde{P}_i, & \text{if $i \in \{2,\ldots,p\}$}\\
            P_i, & \text{otherwise,}
		 \end{cases}
$$
We show that $\varphi(P^1_N)=\varphi(\tilde{P}_N)$. As $\rruleof{1}{\dendow{2}}{\tilde{P}_N} = 1$ and $\tilde{P}_1(1)=P^1_1(1)=x_2$, by SD-top-strategy-proofness, we have $\rruleof{1}{\dendow{2}}{P^1_N} = 1$. For other agents in $\mathcal{C}$, we use Lemma \ref{lemma_1} with $l=2$ and $k=p$. The condition that $\rruleof{2}{\dendow{2}}{P^1_N} = 0$ follows as we have  $\rruleof{1}{\dendow{2}}{P^1_N} = 1$. Thus, by Lemma \ref{lemma_1}, $\rruleof{i}{\dendow{i+1}}{\tilde{P}_N}=1$ for all $i\in \{1,\ldots,p\}$; the outcome at $P_N^1$ is the TTC outcome for the agents in $\mathcal{C}$. Again, we can use IH and show that the conclusion holds for all the agents implying $\varphi(P^1_N)$ is the TTC outcome. 

We now use another induction on the number of agents whose preferences are changed to the preferences in $P_N$. The base case is already shown in the previous paragraph.  Suppose the outcome is the TTC outcome for the agents in $\mathcal{C}$ after changing the preferences of agents $1,\ldots,t$, i.e., for the profile $P^t_N$ where 
$$
P^t_i=\begin{cases}
			\tilde{P}_i, & \text{if $i\in \{t+1,\ldots,k\}$}\\
            P_i, & \text{otherwise.}
		 \end{cases}
$$
We show that the same holds for $P_N^{t+1}$. As $\tilde{P}_{t+1}(1)=P_{t+1}(1)=x_{t+2}$ and $\rruleof{t+1}{\dendow{t+2}{}{}}{P^{t}_N} = 1$, by SD-top-strategy-proofness, $\rruleof{t+1}{\dendow{t+2}{}{}}{P^{t+1}_N} = 1$. For agents in $\{t+2,\ldots,p\}$, we can use Lemma \ref{lemma_1} with $l=t+2$ and $k=p$ where the condition $\rruleof{t+2}{\dendow{t+2}{}{}}{P^{t+1}_N} = 0$ follows as $\rruleof{t+1}{\dendow{t+2}{}{}}{P^{t+1}_N} = 1$. Thus, we have $\rruleof{i}{\dendow{i+1}{}{}}{P^{t+1}_N} = 1$ for all $i\in \{t+2,\ldots,k\}$.

We are now left with showing that $\rruleof{i}{\dendow{i+1}{}{}}{P^{t+1}_N} = 1$ for all $i\in \{1,\ldots,t\}$. We show this for $i=1$, and the same arguments hold for other agents as well. Note that in the proof so far, we have not used any restriction on $P_1$ other than $P_1(1)=x_2$, implying the conclusions in the previous paragraph hold for a particular choice of $P_1$, say $\hat{P}_1\equiv x_2x_{t+2}\cdots$. Therefore, we have $\rruleof{t+1}{\dendow{t+2}{}{}}{\hat{P}_1,P^{t+1}_{-1}} = 1$. Further, by SD-individual rationality of $\varphi$ for agent 1, $\rruleof{1}{\{\dendow{2},{\dendow{t+2}\}}{}}{\hat{P}_1,P^{t+1}_{-1}} = 1$. These two observations together imply $\rruleof{1}{\dendow{2}{}{}}{\hat{P}_1,P^{t+1}_{-1}} = 1$. Now using SD-top-strategy-proofness for agent 1, we may conclude that $\rruleof{1}{\dendow{2}{}{}}{P^{t+1}_{N}} = 1$. This shows that $\rruleof{i}{\dendow{i+1}{}{}}{P^{t+1}_N} = 1$ for all $i\in \{1,\ldots,t\}$. Now the rest of the proof follows using the IH as the number of agents outside the cycle $\mathcal{C}$ is strictly less than $n-1$.
\end{proof}

\section{Equivalence of SD-individual rationality and ex post individual rationality}\label{appen_C}
\begin{lemma}\label{lem_1}
    A PAR $\varphi:\mathcal{D}^n \to \mathcal{A}$ is SD-individually rational if and only if it is ex post individually rational when the initial endowment is deterministic.
\end{lemma}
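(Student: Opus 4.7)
The plan is to reduce both notions to the same statement about the support of the bi-stochastic matrix $\varphi(P_N)$, and then invoke a support-preserving version of the Birkhoff--von Neumann decomposition.

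First I will unpack SD-individual rationality under a deterministic endowment. Since $E_{i\boldsymbol{\bigcdot}}$ is the point mass on $x_i$, the inequality $\varphi_{i\boldsymbol{\bigcdot}}(P_N)(U(y,P_i)) \geq E_{i\boldsymbol{\bigcdot}}(U(y,P_i))$ is automatic whenever $y P_i x_i$ strictly (the right-hand side is $0$), and whenever $x_i P_i y$ strictly it follows from the single nontrivial case $y=x_i$, namely $\varphi_{i\boldsymbol{\bigcdot}}(P_N)(U(x_i,P_i)) = 1$. Thus SD-individual rationality is equivalent to the support condition
\[
\varphi_{ij}(P_N) = 0 \quad \text{for every } i \in N \text{ and every } x_j \notin U(x_i,P_i).
\]
Ex post individual rationality, on the other hand, asks for a convex decomposition $\varphi(P_N) = \sum_k \lambda_k f_k$ in which each deterministic assignment $f_k$ satisfies $f_k(i) \in U(x_i,P_i)$ for all $i$.

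The easy direction is ex post IR $\Rightarrow$ SD IR: if $\varphi(P_N)$ decomposes as above with every $f_k$ individually rational, then for each $i$ the mass of $\varphi_{i\boldsymbol{\bigcdot}}(P_N)$ is supported on $\bigcup_k \{f_k(i)\} \subseteq U(x_i,P_i)$, which gives the support condition, hence SD-IR.

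For the converse I will use a support-preserving Birkhoff--von Neumann decomposition, which I expect to be the main step. Let $S = \{(i,j) : x_j \in U(x_i,P_i)\}$ and suppose the bi-stochastic matrix $A = \varphi(P_N)$ has $A_{ij}=0$ off $S$. Restricting to the bipartite graph of positive entries of $A$ and applying Hall's marriage theorem (whose condition follows from bi-stochasticity), I obtain a permutation $\pi$ with $A_{i,\pi(i)} > 0$ for every $i$; in particular $(i,\pi(i)) \in S$, so the associated deterministic assignment $f_\pi$ is individually rational. Setting $\lambda = \min_i A_{i,\pi(i)} > 0$, the matrix $A' = (A - \lambda P_\pi)/(1-\lambda)$ is again bi-stochastic, is still supported in $S$, and has strictly fewer positive entries than $A$. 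Induction on the number of positive entries then yields a decomposition of $A'$, and hence of $A$, into permutation matrices each supported in $S$, i.e., each corresponding to an individually rational deterministic assignment. This establishes ex post IR and completes the equivalence.

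The only delicate step is the Hall's-theorem argument to ensure that the permutation extracted at each stage stays within $S$; once that is in place, the rest is essentially bookkeeping. The equivalence is therefore the natural probabilistic counterpart of the deterministic fact that IR is preserved under convex combinations, and hinges crucially on the assumption that the endowment is deterministic (so that $U(x_i,P_i)$ is the relevant support set rather than the more complex supports arising from probabilistic endowments).
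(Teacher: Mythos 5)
Your proof is correct, and the overall route---reduce SD-individual rationality to the support condition $\varphi_{ij}(P_N)=0$ whenever $x_j\notin U(x_i,P_i)$, then pass through a Birkhoff--von Neumann decomposition---is the same as the paper's. The one place you diverge is in the converse direction, where you treat the ``support-preserving'' decomposition as the main technical step and re-derive it from Hall's marriage theorem. That machinery is unnecessary: \emph{every} decomposition $\varphi(P_N)=\sum_t \alpha_t Q^t$ into permutation matrices with $\alpha_t>0$ is automatically support-preserving, since $Q^t_{ij}=1$ forces $\varphi_{ij}(P_N)\geq \alpha_t>0$, so each $Q^t$ assigns agent $i$ an object in $U(x_i,P_i)$ and is therefore individually rational. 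This is exactly what the paper does (in one line), and it is why the step you call ``the only delicate step'' is in fact just bookkeeping; your Hall's-theorem argument is a correct proof of the Birkhoff--von Neumann theorem itself, but the lemma only needs the theorem's conclusion plus the trivial support observation.
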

\begin{proof}
    (\textbf{If part:}) Recall that $E$ is the deterministic initial endowment, and let $\varphi$ be a PAR satisfying ex post individual rationality. Consider a profile $P_N\in \mathcal{D}^n$. As $\varphi$ satisfies ex post individual rationality, this means $\varphi(P_N)$ can be written as a convex combination of deterministic assignments $\varphi(P_N)=\sum_{t=1}^k\alpha_tf^t(P_N)$ with $\alpha_t\geq 0$ for all $t\leq k$ and $\sum_{t=1}^k\alpha_t=1$ such that  $f^t(P_N)$ is individually rational at $P_N$ for all $t\in\{1,\ldots,k\}$. This means for any $i\in N$, $f_i^t(P_N)R_i x_i$ for all $t\in N$ implying $\varphi_i(P_N)\succeq_{P_i}x_i$. Hence, $\varphi$ is SD-individually rational.

    (\textbf{Only-if part:}) Now suppose $\varphi$ is SD-individually rational where the initial endowment is deterministic. Consider a profile $P_N\in \mathcal{D}^n$. By SD-individual rationality $\varphi_i(P_N)\succeq_{P_i}x_i$ for all $i\in N$. Moreover, as $\varphi(P_N)$ can be written as a convex combination of deterministic assignments (permutation matrices), say $\varphi(P_N)=\sum_{t=1}^k\alpha_t Q^t$, with $\alpha_t\geq 0$ for all $t\leq k$ and $\sum_{t=1}^k\alpha_t=1$, it must be that $Q^t_{i\cdot}\succeq_{P_i}x_i$ for all $t\in \{1,\ldots,k\}$. Thus, $\varphi$ is ex-post individually rational.
 \end{proof}

\section{Proof of Theorem \ref{thm_3}}
\begin{proof}
    We go along the lines of the proof of Theorem \ref{thm_1}. As SD-individual rationality is equivalent to ex post individual rationality, we only need to check Claim \ref{cla: pe} in the proof of Theorem \ref{thm_1}, as that is the only place where SD-Pareto efficiency is used. Recall that by Claim \ref{cla: ir} in the proof of Theorem \ref{thm_1}, for all $j \in \{1,\ldots,k\}$, we have,
    \begin{equation}\label{eq_0}
\rruleof{{j-1}}{\dendow{j}}{P'_N}+\rruleof{j}{\dendow{j}}{P'_N} = 1.
\end{equation}

Now we use ex post Pareto efficiency to show that Claim \ref{cla: pe} holds, that is, for all agents in $\mathcal{C}$, $\varphi(P'_N)$ is the TTC assignment, i.e., for all $j \in \{1,\ldots,k\}$, $\rruleof{{j-1}}{\dendow{j}}{P'_N}= 1$. In view of (\ref{eq_0}), we must consider deterministic assignments that assign the endowments of all agents in $\mathcal{C}$ to the agents in $\mathcal{C}$ only. Moreover, since all agents in $\mathcal{C}$ have different top-ranked alternatives, and these alternatives are the endowments of agents in $\mathcal{C}$, the only Pareto efficient deterministic assignment is when agents in $\mathcal{C}$ are assigned their top-ranked alternatives. By ex post Pareto efficiency, the probabilistic outcome is a convex combination of deterministic Pareto efficient outcomes, thus we must have $\rruleof{{j-1}}{\dendow{j}}{P'_N}= 1$ for all $j \in \{1,\ldots,k\}$, completing the proof of Claim \ref{cla: pe} in the proof of Theorem \ref{thm_1}. This completes the proof of Theorem \ref{thm_3}.
\end{proof}

\section{Proof of Theorem \ref{thm_4}}
\begin{proof}
    Again, we proceed in the lines of the proof of Theorem \ref{thm_1}. As SD-individual rationality is equivalent to ex post individual rationality (Lemma \ref{lem_1}), we only need to check Claim \ref{cl_2} in the proof of Theorem \ref{thm_2}, as that is the only place where SD-pair efficiency is used. We restate Claim \ref{cl_2} here.

    \begin{claim*}
        $\varphi_{kx_1}(\bar{P}_L,P_{-L})=1$.
    \end{claim*}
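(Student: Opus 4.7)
The idea is to replace the SD-pair-domination step used in Theorem~\ref{thm_2} by an ex-post argument: decompose $\varphi(\bar P_L,P_{-L})$ into deterministic pair-efficient assignments, enumerate which such assignments are compatible with the column structure forced by individual rationality, and rule out all but the one that hands $x_1$ to agent $k$.

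\emph{Step 1 (carry over Claim~1).} The proof of Claim~\ref{cl_1} in Theorem~\ref{thm_2} uses only SD-individual rationality together with the hypothesis that $l=1$ or $\varphi_{lx_l}(\bar P_L,P_{-L})=0$. Since SD-IR and ex post IR coincide here by Lemma~\ref{lem_1}, Claim~\ref{cl_1} remains available and gives
\begin{enumerate}[(a)]
    \item $\sum_{i\in L}\varphi_{ix_1}(\bar P_L,P_{-L})=1$, and
    \item $\varphi_{j-1,x_j}(\bar P_L,P_{-L})+\varphi_{j,x_j}(\bar P_L,P_{-L})=1$ for every $j\in\{l+1,\ldots,k\}$.
\end{enumerate}

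\emph{Step 2 (push these constraints into the decomposition).} By ex post pair efficiency, write $\varphi(\bar P_L,P_{-L})=\sum_{t}\alpha_t f^t$, where $\alpha_t>0$ and every $f^t$ is a deterministic pair-efficient assignment at $(\bar P_L,P_{-L})$; by ex post IR (equivalently Lemma~\ref{lem_1}) each $f^t$ is also individually rational. Because the identities (a) and (b) are tight sums of $[0,1]$-valued entries, every $f^t$ in the decomposition must itself satisfy: the agent receiving $x_1$ lies in $L$, and for each $j\in\{l+1,\ldots,k\}$ the agent receiving $x_j$ lies in $\{j-1,j\}$.

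\emph{Step 3 (enumerate feasible matchings on $L$).} Using these column restrictions together with the IR restriction that agent $i\in L$ only gets objects in $U(x_i,\bar P_i)$ (and that agent $l$ cannot get $x_l$ when $l>1$), I would work from column $x_k$ downwards. Each feasible deterministic assignment of $L$ to $\{x_1,x_{l+1},\ldots,x_k\}$ is indexed by a unique agent $i^*\in L$ who gets $x_1$: agents $l,l+1,\ldots,i^*-1$ get their top-ranked objects $x_{l+1},\ldots,x_{i^*}$, and agents $i^*+1,\ldots,k$ each keep their own endowment. The assignment with $i^*=k$ is precisely the TTC assignment on the cycle.

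\emph{Step 4 (pair efficiency forces $i^*=k$).} Suppose some $f^t$ in the decomposition has $i^*<k$. Then in $f^t$, agent $i^*$ holds $x_1$ and agent $i^*+1$ holds $x_{i^*+1}$. Now $\bar P_{i^*}$ ranks $x_{i^*+1}$ strictly above $x_1$ (by the definition of $\bar P_{i^*}$, whether $i^*=1$ or $i^*>1$), and $\bar P_{i^*+1}$ ranks $x_1$ strictly above $x_{i^*+1}$ (using the middle position of $x_1$ in $\bar P_{i^*+1}$ when $i^*+1<k$, and the top position of $x_1$ in $\bar P_k$ when $i^*+1=k$). Hence the pair $(i^*,i^*+1)$ strictly improves by swapping their holdings, contradicting pair efficiency of $f^t$. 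Therefore every $f^t$ in the decomposition has $i^*=k$, so $f^t_{kx_1}=1$ for all $t$, and $\varphi_{kx_1}(\bar P_L,P_{-L})=\sum_t\alpha_t f^t_{kx_1}=1$. The rest of the proof of Theorem~\ref{thm_4} then proceeds exactly as in Theorem~\ref{thm_2}.

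The main obstacle is Step~3: carefully pinning down the complete list of IR matchings on $L$ that are consistent with the column identities of Claim~\ref{cl_1}, so that pair efficiency in Step~4 can be applied case-by-case. Once Step~3 is in hand, Step~4 is just a routine adjacent swap in the cycle.
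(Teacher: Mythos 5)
Your proposal is correct and follows essentially the same route as the paper: decompose $\varphi(\bar P_L,P_{-L})$ into deterministic pair-efficient assignments, observe that the tight column identities of Claim~\ref{cl_1} pass to every assignment in the decomposition, and kill any assignment giving $x_1$ to some $i^*<k$ via the adjacent swap of $x_1$ and $x_{i^*+1}$ between agents $i^*$ and $i^*+1$. The only remark is that the ``main obstacle'' you identify in Step~3 is not actually needed: you do not have to enumerate the full matching on $L$, since the single column constraint $f^t_{i^*,x_{i^*+1}}+f^t_{i^*+1,x_{i^*+1}}=1$ at $j=i^*+1$ already forces $f^t_{i^*+1,x_{i^*+1}}=1$, which is all Step~4 requires.
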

    \textbf{Proof of the claim} Assume for contradiction that this does not hold, and therefore, by Claim \ref{cl_1}-(i), there exists $i\in \{l,\ldots,k-1\}$ such that $\varphi_{ix_1}(\bar{P}_L,P_{-L})>0$. This means that if we write $\varphi(\bar{P}_L,P_{-L})$ as a convex combination of deterministic assignments, then there must be a deterministic assignment, say $F$, such that $F_{ix_1}=1$. Further, as $\varphi(\bar{P}_L,P_{-L})$ satisfies the condition in Claim \ref{cl_1}-(ii), so does $F$, implying $F_{i+1x_{i+1}}=1$. We show a contradiction by arguing that $F$ is not pair efficient at $(\bar{P}_L,P_{-L})$. To see this, note that in $F$, agent $i$ is assigned $x_1$ and agent $i+1$ is assigned $x_{i+1}$. However, if they swap their assignments, they both will be better off as agent $i$ has $x_{i+1}$ as her top-ranked object and agent $i+1$ prefers $x_1$ over $x_{i+1}$. This completes the proof of the claim, and hence, completes the proof of Theorem \ref{thm_4}.
\end{proof}
    
\end{document}